\documentclass[aps,12pt,twoside,notitlepage,superscriptaddress]{revtex4-1}

\usepackage{graphicx,epic,eepic,epsfig,amsmath,amsfonts,amsthm,latexsym,amssymb,color,xcolor,enumerate}
\usepackage{bm,bbm}
\allowdisplaybreaks[3]
\usepackage[colorlinks,
            linkcolor=blue,
            anchorcolor=blue,
            citecolor=red,
            ]{hyperref}

\newtheorem{theorem}{Theorem}

\newtheorem{lemma}[theorem]{Lemma}

\theoremstyle{definition}
\newtheorem{definition}[theorem]{Definition}
\theoremstyle{remark}
\newtheorem{remark}{Remark}

\newcommand{\ket}[1]{|#1\rangle}
\newcommand{\bra}[1]{\langle#1|}

\newcommand{\proj}[1]{| #1\rangle\!\langle #1 |}

\newcommand{\tr}{\operatorname{Tr}}
\newcommand{\ox}{\otimes}

\newcommand{\supp}{{\operatorname{supp}}}

\newcommand{\rar}{\rightarrow}

\newcommand{\1}{{\openone}}
\newcommand{\id}{{\operatorname{id}}}
\newcommand{\nb}{\nonumber}
\newcommand{\cM}{\mathcal{M}}
\newcommand{\cN}{\mathcal{N}}
\newcommand{\cS}{\mathcal{S}}
\newcommand{\cO}{\mathcal{O}}
\newcommand{\cb}{\mathrm{CB}}

\begin{document}

\title{Completely Bounded Qusi-Norms, Their Mutiplicativity, and New Additivity Results of Quantum Channels}

\author{Ke Li}
    \email{carl.ke.lee@gmail.com}
    \affiliation{Institute for Advanced Study in Mathematics, Harbin Institute of Technology, Harbin 150001, China}	
\author{Quanhua Xu}
    \email{qxu@univ-fcomte.fr}
    \affiliation{Institute for Advanced Study in Mathematics, Harbin Institute of Technology,  Harbin 150001, China; and
	Laboratoire de Math{\'e}matiques, Universit{\'e} Marie \& Louis Pasteur, 25030 Besan\c{c}on Cedex, France}	
    

\begin{abstract}
We obtain two new additivity results of quantum channels. The first one is the additivity of the channel R\'enyi information associated with the sandwiched R\'enyi divergence of order $\alpha\in[\frac{1}{2},1)$. To prove this, we introduce the completely bounded $1\to\alpha$ quasi-norms for completely positive maps, with $\alpha\in[\frac{1}{2},1)$, and show that it is multiplicative. The additivity/multiplicativity derived here extends and complements the results of Devetak {\it et al} (Commun Math Phys 266:37-63, 2006) and Gupta and Wilde (Commun Math Phys 334:867-887, 2015), which deal with the case $\alpha>1$. The second one is the additivity of the channel dispersion, which is a quantity related to the second-order behavior of quantum information tasks.
\end{abstract}

\maketitle
\section{Introduction}
The additivity issue has impacted significantly the development of quantum information theory. Nonadditivity of many entropic formulas is the main obstruction preventing a complete understanding of important quantum problems, such as entanglement~\cite{HHHH2009quantum} and channel capacities~\cite{Hastings2009superadditivity}. On the other hand,  some rare additivity results let us obtain full answers.

In this paper, we present two new additivity results of quantum channels. The first one is the additivity of the channel R\'enyi information associated with the sandwiched R\'enyi divergence of order $\alpha\in[\frac{1}{2},1)$. In the derivation, we introduce the completely bounded $1\to\alpha$ quasi-norms for completely positive maps, with $\alpha\in[\frac{1}{2},1)$,  and show that it is multiplicative. The additivity/multiplicativity obtained in the present paper extends and complements the results of Devetak {\it et al}~\cite{DJKR2006multiplicativity} and Gupta and Wilde~\cite{GuptaWilde2015multiplicativity}, which deal with the case $\alpha>1$. The second one is the additivity of the channel dispersion, which is a quantity related to the second-order behavior of quantum information tasks.


\medskip
\emph{Notations.} Quantum systems are labeled by letters $A$ and $B$, which also represent the associated Hilbert spaces. 
The set of linear operators on $A$ is denoted by $\mathcal{L}(A)$. $\cS(A)$ is the set of quantum states (density operators) on $A$, $\cS_1(A)$ is the set of pure quantum states (density operators with rank one), and $\cS^+(A)$ is the set of quantum states with full rank. We always use $\Phi$ to denote an unnormalized maximally entangled state. That is, $\Phi_{AA'}=\sum_{i,j}\ket{i}\bra{j}_A\ox\ket{i}\bra{j}_{A'}$, where $\{\ket{i}\}_i$ is a set of orthonormal basis of $A$ or $A'$. A completely positive map $\cM_{A\rar B}$ from $\mathcal{L}(A)$ to $\mathcal{L}(B)$ has a Stinespring representation $U: A \rar B\ox E$, such that $\cM(\rho) = \tr_E U\rho U^*$. The map $\cM^C$, which takes $\rho_A$ to $\cM^C(\rho) = \tr_B U\rho U^*$, is called the complementary map of $\cM_{A\rar B}$. $\cM_{A\rar B}$ is a quantum channel if it is further trace-preserving. In this case, $U$ is an isometry and $\cM^C$ is a channel too.

Throughout this paper, all the optimizations, if not specified, are over quantum states on the underlying systems.

\section{Multiplicativity of Completely Bounded Quasi-norms}
For $\alpha\geq 1$ and a completely positive map $\cM:\mathcal{L}(A')\rar\mathcal{L}(B)$, the completely bounded $1\rar\alpha$ norm of $\cM$ has the following simple form~\cite[Theorem 10]{DJKR2006multiplicativity}:
\begin{equation}\label{eq:cbnorm}
\|\cM\|_{\cb,1\rar\alpha}
=\max_{\varphi_{AA'}\in\cS_1(AA')}\frac{\|\id_A\ox\cM(\varphi_{AA'})\|_\alpha}{\|\tr_A\varphi_{AA'}\|_\alpha}.
\end{equation}
In analogy to \eqref{eq:cbnorm}, we extend the definition of the completely bounded $1\rar\alpha$ norm to the range $\alpha\in(0,1)$, as
\begin{equation}\label{eq:cbquasinorm}
\|\cM\|_{\cb,1\rar\alpha}
:=\min_{\varphi_{AA'}\in\cS_1(AA')}\frac{\|\id_A\ox\cM(\varphi_{AA'})\|_\alpha}{\|\tr_A\varphi_{AA'}\|_\alpha},
\quad 0<\alpha<1.
\end{equation}
Let $\cM^C$ be the complementary map of $\cM$, and $\Phi_{AA'}$ be the unnormalized maximally entangled state. The following two expressions for $\|\cM\|_{\cb,1\rar\alpha}$ is equivalent to \eqref{eq:cbquasinorm}, which can be easily verified and will be useful for later derivation.
\begin{equation}\label{eq:cbquasinorma}
\|\cM\|_{\cb,1\rar\alpha}
=\min_{\rho_A\in\cS(A)}\left\|\rho_A^\frac{1}{2\alpha}\cM(\Phi_{AA'})\rho_A^\frac{1}{2\alpha}\right\|_\alpha
=\min_{X_{A'}\geq 0}\frac{\left\|\cM^C(X_{A'})\right\|_\alpha}{\|X_{A'}\|_\alpha}, \quad 0<\alpha<1.
\end{equation}

Our main result is that the completely bounded $1\rar\alpha$ quasi-norms are multiplicative for $\alpha\in[\frac{1}{2}, 1)$.
\begin{theorem}\label{thm:multiplicativity}
Let $\cM_1:\mathcal{L}(A_1')\rar\mathcal{L}(B_1)$ and $\cM_2:\mathcal{L}(A_2')\rar\mathcal{L}(B_2)$ be completely positive maps. For any $\alpha\in[\frac{1}{2}, 1)$, we have
\begin{equation}
\|\cM_1\ox\cM_2\|_{\cb,1\rar\alpha}=\|\cM_1\|_{\cb,1\rar\alpha}\,\|\cM_2\|_{\cb,1\rar\alpha}.
\end{equation}
\end{theorem}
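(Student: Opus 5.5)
The plan is to prove the two inequalities separately; the easy one comes from product states. Take optimal states $\rho_1$ on $A_1$ and $\rho_2$ on $A_2$ in the first expression of \eqref{eq:cbquasinorma}. Since $(\cM_1\ox\cM_2)(\Phi_{A_1A_1'}\ox\Phi_{A_2A_2'})=\cM_1(\Phi_{A_1A_1'})\ox\cM_2(\Phi_{A_2A_2'})$ and $(\rho_1\ox\rho_2)^{1/2\alpha}=\rho_1^{1/2\alpha}\ox\rho_2^{1/2\alpha}$, the Schatten $\alpha$-quasi-norm is multiplicative on tensor products. Plugging $\rho_1\ox\rho_2$ into the minimum therefore gives $\|\cM_1\ox\cM_2\|_{\cb,1\rar\alpha}\le\|\cM_1\|_{\cb,1\rar\alpha}\|\cM_2\|_{\cb,1\rar\alpha}$. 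This holds for every $\alpha\in(0,1)$.

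For the reverse inequality I would work with the complementary form in \eqref{eq:cbquasinorma}. The complementary map of $\cM_1\ox\cM_2$ is $\cM_1^C\ox\cM_2^C=(\cM_1^C\ox\id_{E_2})\circ(\id_{A_1'}\ox\cM_2^C)$. Fix $X\ge0$ on $A_1'A_2'$ and set $Y=(\id\ox\cM_2^C)(X)\ge0$ on $A_1'E_2$. It then suffices to establish the following \emph{stabilization lemma}: for every positive $Y$ on $A'\ox R$, with $R$ an arbitrary auxiliary system,
\begin{equation*}
\left\|(\cM^C\ox\id_R)(Y)\right\|_\alpha\ \ge\ \|\cM\|_{\cb,1\rar\alpha}\,\|Y\|_\alpha ,
\end{equation*}
together with the mirror statement with $\id_R$ acting on the left. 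Applying the lemma first to $\cM_1$ with $R=E_2$, and then to $\cM_2$ with $R=A_1'$, gives $\|(\cM_1^C\ox\cM_2^C)(X)\|_\alpha\ge\|\cM_1\|_{\cb,1\rar\alpha}\|\cM_2\|_{\cb,1\rar\alpha}\|X\|_\alpha$. Minimizing over $X$ then yields the claim.

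The stabilization lemma is the main obstacle, and it is where $\alpha\ge\frac12$ should enter. My first attempt would rewrite $\|(\cM^C\ox\id_R)(Y)\|_\alpha$ through a variational formula over states $\sigma_R$ of the Pisier / Devetak et al.\ type, i.e.\ an operator-valued ($L_1(L_\alpha)$-like) norm in which the $R$-dependence is absorbed by a weight $\sigma_R^{\gamma}$, with $\gamma$ depending on $\alpha$. With $\sigma_R$ fixed, conjugating by $\sigma_R^{\gamma}$ commutes with $\cM^C\ox\id_R$. The problem then reduces to the unextended inequality, in its sandwiched form $\min_{\rho_A}\|\rho_A^{1/2\alpha}\cM(\Phi)\rho_A^{1/2\alpha}\|_\alpha$, applied to a purification of the weighted operator. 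The reduction to the unextended case goes through the equivalence of the three expressions in \eqref{eq:cbquasinorma}, with $R$ absorbed into the reference system $A$.

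Making this exchange of minimizations legitimate needs a joint concavity/convexity (minimax) property of maps such as $(\rho,\sigma)\mapsto\tr\bigl(\rho^{1/2\alpha}W\rho^{1/2\alpha}\bigr)^\alpha$. I expect this property to follow from Lieb concavity / Ando convexity precisely when $\frac{1}{\alpha}-1\in(0,1]$, i.e.\ $\alpha\in[\frac12,1)$. If the minimax step fails, the fallback is to prove the lemma directly by duality: use the reverse Hölder inequality for $\alpha<1$, i.e.\ $\|Z\|_\alpha=\min\{\tr ZW:\ W>0,\ \|W^{-1}\|_{\alpha/(1-\alpha)}\le1\}$ for $Z\ge0$. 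Then pass to the adjoint map, where the known multiplicativity results for exponent $\frac{\alpha}{1-\alpha}\ge1$ of Devetak et al.\ can be invoked; note that this exponent is at least $1$ exactly when $\alpha\ge\frac12$.
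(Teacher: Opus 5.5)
Your reduction is correct and coincides with the paper's. Since $(\cM\ox\tr_R)^C=\cM^C\ox\id_R$, your stabilization lemma is, by the second expression in \eqref{eq:cbquasinorma}, precisely the nontrivial half $\|\cM\ox\tr_R\|_{\cb,1\rar\alpha}\ge\|\cM\|_{\cb,1\rar\alpha}$ of Lemma~\ref{lem:tensortr}. The factorization $\cM_1^C\ox\cM_2^C=(\cM_1^C\ox\id_{E_2})\circ(\id_{A_1'}\ox\cM_2^C)$ is exactly how the paper derives Theorem~\ref{thm:multiplicativity} from that lemma.

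The gap is that this lemma carries all of the content, and you do not prove it. Your first route never says which variational formula over $\sigma_R$ is meant, nor what $\gamma$ is. The Pisier / Devetak et al.\ formulas you allude to are for exponents $\ge 1$, not for the quasi-norm $\|\cdot\|_\alpha$ with $\alpha<1$. The minimax exchange and the ``reduction to the unextended inequality'' are only announced, never carried out. Your fallback fails as stated. By reverse H\"older, the lemma is equivalent to $\|[\Psi(W)]^{-1}\|_\beta\le\|\cM\|_{\cb,1\rar\alpha}^{-1}$ for all $W>0$ with $\|W^{-1}\|_\beta\le 1$, where $\Psi=(\cM^C)^\dagger\ox\id_R$ and $\beta=\alpha/(1-\alpha)$. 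This is a statement about the nonlinear map $V\mapsto[\Psi(V^{-1})]^{-1}$, not about a completely bounded norm of a linear CP map. So the multiplicativity results of Devetak et al.\ cannot be invoked for it.

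The missing idea is a symmetrization argument in the first (sandwiched) expression of \eqref{eq:cbquasinorma}; no minimax is needed. Write $K=\cM(\Phi_{AA'})$ and let $\bar R$ be the reference copy of $R$. Then $(\cM\ox\tr_R)(\Phi_{AA'}\ox\Phi_{\bar RR})=K\ox\1_{\bar R}$, and using $\|Z^*Z\|_\alpha=\|ZZ^*\|_\alpha$,
\[
\|\cM\ox\tr_R\|_{\cb,1\rar\alpha}^\alpha=\min_{\rho_{A\bar R}}\tr\Bigl[\bigl((K\ox\1_{\bar R})^{\frac12}\rho_{A\bar R}^{\frac1\alpha}(K\ox\1_{\bar R})^{\frac12}\bigr)^\alpha\Bigr].
\]
This objective has two properties that together close the argument.
\begin{enumerate}[(i)]
\item It is invariant under $\rho_{A\bar R}\mapsto(\1_A\ox U)\rho_{A\bar R}(\1_A\ox U)^*$ for every unitary $U$ on $\bar R$, because $\1_A\ox U$ commutes with $K\ox\1_{\bar R}$.
\item It is convex in $\rho_{A\bar R}$ by Carlen--Lieb (Lemma~\ref{lem:convexity} with $p=1/\alpha\in(1,2]$, $q=1$). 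This is exactly where $\alpha\ge\frac12$ is used.
\end{enumerate}
Averaging over the Heisenberg--Weyl operators on $\bar R$ therefore cannot increase the objective, and it sends $\rho_{A\bar R}$ to $\rho_A\ox\1_{\bar R}/|R|$. At that point the $\bar R$ factor contributes $|R|\cdot|R|^{-1}=1$, and the remaining minimum is $\|\cM\|_{\cb,1\rar\alpha}^\alpha$. You identified the relevant convexity, but only as input to a minimax exchange. What proves the lemma is combining it with the unitary invariance on the reference of $R$.
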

\begin{proof}
The ``$\leq$'' part is obvious. It suffices to prove the other direction. We employ a method similar to~\cite{Jencova2006a}. Let $\cM_1^C:\mathcal{L}(A_1')\rar\mathcal{L}(E_1)$ and $\cM_2^C:\mathcal{L}(A_2')\rar\mathcal{L}(E_2)$ be the complementary maps of $\cM_1$ and $\cM_2$, respectively. Using the second expression of \eqref{eq:cbquasinorma}, we have
\begin{align}
 &\left\|\cM_1\ox\cM_2\right\|_{\cb,1\rar\alpha} \nb\\
=&\min_{X_{A'_1A'_2}\geq0}\frac{\left\|\cM_1^C\ox\cM_2^C(X_{A'_1A'_2})\right\|_\alpha}{\left\|X_{A'_1A'_2}\right\|_\alpha} \nb\\
=&\min_{X_{A'_1A'_2}\geq0}\frac{\left\|(\cM_1^C\ox\id_{E_2})\circ(\id_{A'_1}\ox\cM_2^C)(X_{A'_1A'_2})\right\|_\alpha}{\left\|\id_{A'_1}\ox\cM_2^C(X_{A'_1A'_2})\right\|_\alpha}\frac{\left\|\id_{A'_1}\ox\cM_2^C(X_{A'_1A'_2})\right\|_\alpha}{\left\|X_{A'_1A'_2}\right\|_\alpha} \nb\\
\geq&\min_{X_{A'_1E_2}\geq0}\frac{\left\|(\cM_1\ox\tr_{E_2})^C(X_{A'_1E_2})\right\|_\alpha}{\left\|X_{A'_1E_2}\right\|_\alpha}
     \min_{X_{A'_1A'_2}\geq0}\frac{\left\|(\tr_{A'_1}\ox\cM_2)^C(X_{A'_1A'_2})\right\|_\alpha}{\left\|X_{A'_1A'_2}\right\|_\alpha} \nb\\
=&\left\|\cM_1\ox\tr_{E_2}\right\|_{\cb,1\rar\alpha}\left\|\tr_{A'_1}\ox\cM_2\right\|_{\cb,1\rar\alpha} \nb\\
=&\left\|\cM_1\right\|_{\cb,1\rar\alpha}\left\|\cM_2\right\|_{\cb,1\rar\alpha},
\end{align}
where in the fourth line we have used the fact that the identity map is the complementary of the trace map, and the last line is by Lemma~\ref{lem:tensortr}.
\end{proof}

\begin{lemma}\label{lem:tensortr}
Let $\cM_{A'_1\rar B_1}$ be a completely positive map, and $\tr_{A'_2}$ be the trace map. Then for $\alpha\in[\frac{1}{2}, 1)$,
\begin{equation}
\left\|\cM\ox\tr\right\|_{\cb,1\rar\alpha} = \left\|\cM\right\|_{\cb,1\rar\alpha}.
\end{equation}
\end{lemma}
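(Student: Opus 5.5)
The plan is to work with the first expression in \eqref{eq:cbquasinorma}. Write $Y_{A_1B_1}:=\cM(\Phi_{A_1A_1'})\geq 0$. Since $\Phi_{A_1A_2,A_1'A_2'}=\Phi_{A_1A_1'}\ox\Phi_{A_2A_2'}$ and $\tr_{A_2'}\Phi_{A_2A_2'}=\1_{A_2}$, we have $(\cM\ox\tr)(\Phi)=Y_{A_1B_1}\ox\1_{A_2}$. Hence
\begin{equation*}
\left\|\cM\ox\tr\right\|_{\cb,1\rar\alpha}=\min_{\rho_{A_1A_2}}\left\|\rho_{A_1A_2}^{\frac{1}{2\alpha}}(Y_{A_1B_1}\ox\1_{A_2})\rho_{A_1A_2}^{\frac{1}{2\alpha}}\right\|_\alpha .
\end{equation*}

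For the ``$\leq$'' direction, restrict to product states $\rho_{A_1A_2}=\rho_{A_1}\ox\proj{0}_{A_2}$. The operator then factorizes as $\rho_{A_1}^{\frac{1}{2\alpha}}Y\rho_{A_1}^{\frac{1}{2\alpha}}\ox\proj{0}$, and its $\alpha$-quasi-norm is exactly the quantity defining $\|\cM\|_{\cb,1\rar\alpha}$. Minimizing over $\rho_{A_1}$ gives $\|\cM\ox\tr\|_{\cb,1\rar\alpha}\leq\|\cM\|_{\cb,1\rar\alpha}$. This holds for every $\alpha\in(0,1)$.

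For the ``$\geq$'' direction, fix $\rho_{A_1A_2}$ and set $\sigma_{A_1}=\tr_{A_2}\rho_{A_1A_2}$. The goal is the pointwise inequality
\begin{equation*}
\left\|\rho^{\frac{1}{2\alpha}}(Y\ox\1_{A_2})\rho^{\frac{1}{2\alpha}}\right\|_\alpha\;\geq\;\left\|\sigma^{\frac{1}{2\alpha}}Y\sigma^{\frac{1}{2\alpha}}\right\|_\alpha ,
\end{equation*}
since the right-hand side is at least $\|\cM\|_{\cb,1\rar\alpha}$. Taking the minimum over $\rho$ then finishes the proof.

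To attack this, I first use $\|Z^*Z\|_\alpha=\|ZZ^*\|_\alpha$ with $Z=(Y^{1/2}\ox\1)\rho^{\frac{1}{2\alpha}}$. This rewrites the left side as $\|(Y^{1/2}\ox\1)\rho^{1/\alpha}(Y^{1/2}\ox\1)\|_\alpha$, and the right side similarly as $\|Y^{1/2}\sigma^{1/\alpha}Y^{1/2}\|_\alpha$. Next I recast both sides as sandwiched R\'enyi quantities $\widetilde Q_\alpha(\omega\|\tau)=\tr\big(\tau^{\frac{1-\alpha}{2\alpha}}\omega\tau^{\frac{1-\alpha}{2\alpha}}\big)^\alpha$. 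Writing $\rho^{\frac{1}{2\alpha}}=\rho^{\frac12}\rho^{\frac{1-\alpha}{2\alpha}}$ and purifying $\rho_{A_1A_2}$ to $\psi_{A_1A_2R}$ makes $\rho^{1/2}(Y\ox\1)\rho^{1/2}$ unitarily equivalent to an operator of the form $(\id_R\ox\cM\ox\tr)(\psi)$. In this form the $A_2$-partial trace appears as a channel acting simultaneously on the ``state'' argument and on the ``reference'' argument $\rho$. The $\alpha$-th power of the left side then becomes a $\widetilde Q_\alpha$ evaluated before a partial trace over $A_2$, and that of the right side the same $\widetilde Q_\alpha$ evaluated after it. The inequality is then exactly the data-processing inequality for the sandwiched R\'enyi divergence, which for $\alpha<1$ reads $\widetilde Q_\alpha(\cN(\omega)\|\cN(\tau))\geq\widetilde Q_\alpha(\omega\|\tau)$ and is known to hold precisely for $\alpha\in[\frac12,1)$. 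This is where the restriction $\alpha\geq\frac12$ enters.

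I expect the main obstacle to be carrying out this identification precisely. The difficulty is arranging the operators so that tracing out $A_2$ is literally a single CPTP map applied to both arguments, while the normalization $\|\tr_A\varphi\|_\alpha$ in \eqref{eq:cbquasinorm} is correctly absorbed into the reference operator. The mismatch of the powers $\frac{1}{2\alpha}$ and $\frac{1-\alpha}{2\alpha}$ is what makes this delicate. If the direct identification proves awkward, my fallback is an operator-space route via the complementary form in \eqref{eq:cbquasinorma}, that is, comparing $\|(\cM^C\ox\id_{A_2})(X)\|_\alpha$ with $\|X\|_\alpha$. I would then attempt an interpolation or variational-formula argument for the $\alpha$-quasi-norm restricted to $\alpha\in[\frac12,1)$. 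Concavity of $\|\cdot\|_\alpha$ alone is not enough for this, since it points in the wrong direction.
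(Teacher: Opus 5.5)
\textbf{There is a genuine gap: the data-processing step runs in the wrong direction.}

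Your ``$\leq$'' direction is fine. Your reduction to the pointwise inequality $\|\rho^{\frac{1}{2\alpha}}(Y\ox\1_{A_2})\rho^{\frac{1}{2\alpha}}\|_\alpha\geq\|\sigma^{\frac{1}{2\alpha}}Y\sigma^{\frac{1}{2\alpha}}\|_\alpha$ is also fine; this inequality is true, and it is what the paper's argument establishes for each fixed $\rho$.

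The problem is the step meant to prove it. Set $\omega=\rho^{1/2}(Y\ox\1_{A_2})\rho^{1/2}$ and $\tau=\rho\ox\1_{B_1}$. The left side is indeed $\widetilde Q_\alpha(\omega\|\tau)^{1/\alpha}$. But suppose the right side were $\widetilde Q_\alpha$ of the image of this pair under some channel, whether the partial trace over $A_2$ or anything else. Data processing for $\alpha\in[\frac12,1)$ would then give right side $\geq$ left side, which is the opposite of what you need.

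A concrete example: take $\cM=\id$ on a qubit and $\rho$ maximally entangled between $A_1$ and $A_2$, so $\sigma=\1/2$. Then $\tr_{A_2}\omega=\frac{\1}{2}\ox\frac{\1}{2}$, whereas $\sigma^{1/2}Y\sigma^{1/2}=\frac12\Phi$, so the partial trace does not even produce the right-side operator. Moreover, the $\alpha$-th powers of the two sides are $2^{1-\alpha}$ and $2^{\alpha-1}$. Hence no channel whatsoever can map $(\omega,\tau)$ to a pair whose $\widetilde Q_\alpha$ equals the right side. Your fallback via the complementary form is too vague to close this gap.

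\textbf{The fix: let the channel go from the small reference to the large one.}

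Consider the pure states $\psi=\sigma^{1/2}\Phi_{A_1A_1'}\sigma^{1/2}$ and $\Psi=\rho^{1/2}(\Phi_{A_1A_1'}\ox\Phi_{A_2A_2'})\rho^{1/2}$. Both have the same marginal $\sigma^T$ on $A_1'$, the system on which $\cM$ acts. Therefore $\Psi=(V\ox\1)\psi(V\ox\1)^*$ for some isometry $V:A_1\to A_1A_2A_2'$.

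Define $\cN(X)=\tr_{A_2'}VXV^*$. It acts on $A_1$, so it commutes with $\cM$, and it satisfies $\cN(\sigma)=\rho$ and $(\cN\ox\id_{B_1})(\sigma^{1/2}Y\sigma^{1/2})=\rho^{1/2}(Y\ox\1_{A_2})\rho^{1/2}$.

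Now apply data processing for $\widetilde Q_\alpha$, which holds for unnormalized positive semidefinite arguments when $\alpha\in[\frac12,1)$. Use the channel $\cN\ox\id_{B_1}$ and the pair $(\sigma^{1/2}Y\sigma^{1/2},\sigma\ox\1_{B_1})$. Together with $\rho^{\frac{1-\alpha}{2\alpha}}\rho^{\frac12}=\rho^{\frac{1}{2\alpha}}$, this gives exactly your pointwise inequality.

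\textbf{Comparison with the paper.} With this repair your route is a legitimate alternative. The paper instead twirls $\rho$ with Heisenberg--Weyl unitaries on $A_2$; these leave the left side invariant because of the factor $\1_{A_2}$. It then uses the Carlen--Lieb convexity of $X\mapsto\tr[(Y^{1/2}X^{1/\alpha}Y^{1/2})^\alpha]$, which needs $1/\alpha\leq2$, to pass to $\sigma\ox\1_{A_2}/|A_2|$, where the dimension factors cancel.

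Your version is more conceptual: it is the same mechanism that allows restricting to references $A\cong A'$ in the R\'enyi information. The paper's version is self-contained given the Carlen--Lieb inequality. Both need $\alpha\geq\frac12$ for essentially the same reason.
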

\begin{proof}
It is easy to see that $\|\tr\|_{\cb,1\rar\alpha}=1$. So,
\begin{equation}\label{eq:trleq}
     \left\|\cM\ox\tr\right\|_{\cb,1\rar\alpha}
\leq \left\|\cM\right\|_{\cb,1\rar\alpha} \left\|\tr\right\|_{\cb,1\rar\alpha}
  =  \left\|\cM\right\|_{\cb,1\rar\alpha}.
\end{equation}
To show the other direction, We use the first expression of \eqref{eq:cbquasinorma} and write
\begin{align}
  &\left\|\cM\ox\tr\right\|_{\cb,1\rar\alpha}^\alpha \nb\\
= &\min_{\rho_{A_1A_2}}\left\|(\rho_{A_1A_2})^\frac{1}{2\alpha}
   \left[\cM\ox\tr_{A'_2}(\Phi_{A_1A'_1}\ox\Phi_{A_2A'_2})\right]
   (\rho_{A_1A_2})^\frac{1}{2\alpha}\right\|_\alpha^\alpha \nb\\
= &\min_{\rho_{A_1A_2}}\left\|\left[\cM(\Phi_{A_1A'_1})\ox\1_{A_2}\right]^\frac{1}{2}
   (\rho_{A_1A_2})^\frac{1}{\alpha}
   \left[\cM(\Phi_{A_1A'_1})\ox\1_{A_2}\right]^\frac{1}{2}\right\|_\alpha^\alpha \nb\\
= &\min_{\rho_{A_1A_2}}\sum_{k=1}^{|A_2|^2}\frac{1}{|A_2|^2}
   \left\|\left[\cM(\Phi_{A_1A'_1})\right]^\frac{1}{2}
    \left(W_{A_2}^{(k)}\rho_{A_1A_2}W_{A_2}^{(k)*}\right)^\frac{1}{\alpha}
   \left[\cM(\Phi_{A_1A'_1})\right]^\frac{1}{2}\right\|_\alpha^\alpha, \label{eq:trgeq-1}
\end{align}
where $\{W_{A_2}^{(k)}\}_k$ is the set of Heisenberg-Weyl operators on $A_2$, and the last equality is because the Schatten qusi-norms is invariant under multiplication of unitary operators. Now, we can apply the convexity property of Lemma~\ref{lem:convexity} (with $q=1$) to lower bound \eqref{eq:trgeq-1}. This lets us obtain
\begin{align}
     &\left\|\cM\ox\tr\right\|_{\cb,1\rar\alpha} \nb\\
\geq &\min_{\rho_{A_1A_2}}\left\|\left[\cM(\Phi_{A_1A'_1})\right]^\frac{1}{2}
      \left(\sum_{k=1}^{|A_2|^2}\frac{1}{|A_2|^2}W_{A_2}^{(k)}\rho_{A_1A_2}W_{A_2}^{(k)*}\right)
      ^\frac{1}{\alpha}\left[\cM(\Phi_{A_1A'_1})\right]^\frac{1}{2}\right\|_\alpha \nb\\
  =  &\min_{\rho_{A_1}}\left\|\left[\cM(\Phi_{A_1A'_1})\right]^\frac{1}{2}
      \left(\rho_{A_1}\ox\frac{\1_{A_2}}{|A_2|}\right)^\frac{1}{\alpha}
      \left[\cM(\Phi_{A_1A'_1})\right]^\frac{1}{2}\right\|_\alpha \nb\\
  =  &\min_{\rho_{A_1}}\left\|\left[\cM(\Phi_{A_1A'_1})\right]^\frac{1}{2}
      \left(\rho_{A_1}\right)^\frac{1}{\alpha}
      \left[\cM(\Phi_{A_1A'_1})\right]^\frac{1}{2}\right\|_\alpha \nb\\
  =  &\left\|\cM\right\|_{\cb,1\rar\alpha}.
\end{align}
\end{proof}

\section{Additivity of Quantum R\'enyi Information}
\begin{definition}[\cite{MDSFT2013on, WWY2014strong}]
\label{def:sandwiched-RD}
Let $\alpha \in [\frac{1}{2},1)\cup(1,+\infty)$, and let $\rho$ be a quantum state and $\sigma$ be positive semidefinite. When $\alpha > 1$ and $\supp(\rho)\subseteq\supp(\sigma)$ or $\alpha \in [\frac{1}{2},1)$ and $\supp(\rho)\not\perp\supp(\sigma)$, the sandwiched R{\'e}nyi divergence of order $\alpha $ is defined as
\begin{equation}
D_{\alpha}(\rho \| \sigma)
:=\frac{1}{\alpha-1} \log \tr {({\sigma}^{\frac{1-\alpha}{2\alpha}} \rho {\sigma}^{\frac{1-\alpha}{2\alpha}})}^\alpha;
\end{equation}
otherwise, we set $D_{\alpha}(\rho \| \sigma)=+\infty$.
\end{definition}
The quantum R\'enyi information of a channel $\cN_{A'\rar B}$, based on this divergence, is defined as~\cite{GuptaWilde2015multiplicativity}
\begin{equation}\label{eq:renyi-I-def}
I_{\alpha}(\cN)
:=\max_{\rho_{AA'}}\min_{\sigma_B}D_\alpha(\cN(\rho_{AA'})\|\rho_A\ox\sigma_B),
\end{equation}
where the maximization is over all pure states $\rho_{AA'}\in\cS_1(AA')$ with $A\cong A'$, and the minimization is over all states $\sigma_B\in\cS(B)$.

\begin{remark}
The maximization in \eqref{eq:renyi-I-def} can be over all states $\rho_{AA'}\in\cS(AA')$, and over all positive integers for the dimension of $A$. However, due to the data processing inequality of the sandwiched R{\'e}nyi divergence~\cite{FrankLieb2013monotonicity}, this does not make any difference compared to the current definition. The quantity
$\min_{\sigma_B}D_\alpha(\cN(\rho_{AA'})\|\rho_A\ox\sigma_B)$ is a quantum R{\'e}nyi mutual information of the state $\cN(\rho_{AA'})$~\cite{WWY2014strong, Beigi2013sandwiched}. So, $I_{\alpha}(\cN)$ is the maximal R{\'e}nyi mutual information that the channel $\cN$ can generate.
\end{remark}

When $\alpha>1$, the quantum R\'enyi information characterizes the strong converse exponent of channel communication~\cite{GuptaWilde2015multiplicativity, CMW2016strong, LiYao2022strong} as well as the reliability function of channel simulation~\cite{LiYao2021reliable}, for which the additivity of $I_\alpha(\cN)$~\cite{GuptaWilde2015multiplicativity} has played an important role. In this paper, we show that for $\alpha\in[\frac{1}{2}, 1)$, the multiplicativity of the completely bounded quasi-norms of Theorem~\ref{thm:multiplicativity} implies the additivity of the quantum R\'enyi information. This extends  the multiplicativity/additivity results of the case $\alpha>1$ established in~\cite{DJKR2006multiplicativity} and~\cite{GuptaWilde2015multiplicativity}.
\begin{theorem}\label{thm:additivity1}
Let $\cN_1:\mathcal{L}(A_1')\rar\mathcal{L}(B_1)$ and $\cN_2:\mathcal{L}(A_2')\rar\mathcal{L}(B_2)$ be two quantum channels. For any $\alpha\in[\frac{1}{2}, 1)$, it holds that
\begin{equation}
I_\alpha(\cN_1\ox\cN_2)=I_\alpha(\cN_1)+I_\alpha(\cN_2).
\end{equation}
\end{theorem}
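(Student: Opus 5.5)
The plan is to rewrite $I_\alpha$ as a logarithm of a completely bounded quasi-norm, so that Theorem~\ref{thm:multiplicativity} gives one inequality. The other inequality will come from a product-state argument.

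\emph{Step 1 (rewriting $D_\alpha$).} Fix a pure state $\rho_{AA'}$, written as $\rho_{AA'}=\rho_A^{1/2}\Phi_{AA'}\rho_A^{1/2}$ up to the usual transpose identification of $\rho_A$. Set $\gamma=\frac{1-\alpha}{2\alpha}$ and $\Theta_\sigma(Y):=\sigma_B^{\gamma}Y\sigma_B^{\gamma}$. Since $\gamma+\frac12=\frac{1}{2\alpha}$, we have
\[
\tr\big((\rho_A\ox\sigma_B)^{\gamma}\,\cN(\rho_{AA'})\,(\rho_A\ox\sigma_B)^{\gamma}\big)^\alpha
=\Big\|\rho_A^{\frac{1}{2\alpha}}\,(\id_A\ox\Theta_\sigma\circ\cN)(\Phi_{AA'})\,\rho_A^{\frac{1}{2\alpha}}\Big\|_\alpha^\alpha .
\]
Because $\alpha<1$, maximizing $D_\alpha$ over $\rho_{AA'}$ is the same as minimizing this quantity over $\rho_A$. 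By the first expression in \eqref{eq:cbquasinorma}, for fixed $\sigma_B$,
\[
\max_{\rho_{AA'}}D_\alpha(\cN(\rho_{AA'})\|\rho_A\ox\sigma_B)=\frac{\alpha}{\alpha-1}\log\|\Theta_\sigma\circ\cN\|_{\cb,1\rar\alpha}.
\]

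\emph{Step 2 (minimax exchange).} I would next show $I_\alpha(\cN)=\min_{\sigma_B}\max_{\rho_{AA'}}D_\alpha(\cN(\rho_{AA'})\|\rho_A\ox\sigma_B)$ using Sion's minimax theorem. The variables are $\rho_A$, ranging over the compact convex set $\cS(A)$, and $\sigma_B\in\cS(B)$. Since $\alpha-1<0$, the map $\sigma\mapsto D_\alpha$ is a decreasing function of $Q_\alpha$, where $Q_\alpha$ denotes the trace quantity in Step 1.
\begin{itemize}
\item On the $\sigma$ side, $Q_\alpha$ is concave in $\sigma$ for $\alpha\in[\frac12,1)$ (Frank--Lieb type concavity), so $D_\alpha$ is quasi-convex in $\sigma$.
\item On the $\rho_A$ side, I need $\rho_A\mapsto\|\rho_A^{1/2\alpha}Y\rho_A^{1/2\alpha}\|_\alpha^\alpha$ to be convex, which gives quasi-concavity of $D_\alpha$ in $\rho_A$. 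With $Y^{1/2}$ as in Lemma~\ref{lem:tensortr}, this equals $\|Y^{1/2}\rho_A^{1/\alpha}Y^{1/2}\|_\alpha^\alpha$, which is exactly the convexity statement of Lemma~\ref{lem:convexity} used there.
\end{itemize}
I expect this step to be the main obstacle: checking the convexity/concavity hypotheses and the semicontinuity at boundary points where supports become orthogonal and $D_\alpha=+\infty$. Combining with Step 1 gives
\[
I_\alpha(\cN)=\frac{\alpha}{\alpha-1}\log\max_{\sigma_B}\|\Theta_\sigma\circ\cN\|_{\cb,1\rar\alpha}.
\]

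\emph{Step 3 (``$\leq$'').} Restrict the minimization over $\sigma_{B_1B_2}$ to product states $\sigma_1\ox\sigma_2$, chosen optimal for $\cN_1$ and $\cN_2$ separately. Then $\Theta_{\sigma_1\ox\sigma_2}\circ(\cN_1\ox\cN_2)=(\Theta_{\sigma_1}\circ\cN_1)\ox(\Theta_{\sigma_2}\circ\cN_2)$ is a tensor product of completely positive maps. Theorem~\ref{thm:multiplicativity} makes its quasi-norm factor, and taking $\frac{\alpha}{\alpha-1}\log$ gives $I_\alpha(\cN_1\ox\cN_2)\le I_\alpha(\cN_1)+I_\alpha(\cN_2)$.

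\emph{Step 4 (``$\geq$'').} Insert product inputs $\rho_{A_1A_1'}\ox\rho_{A_2A_2'}$ into the original max--min definition \eqref{eq:renyi-I-def}. It remains to show that the sandwiched R\'enyi mutual information $\min_{\sigma_B}D_\alpha(\omega_{AB}\|\omega_A\ox\sigma_B)$ is additive on tensor-product states $\omega=\omega_1\ox\omega_2$. For this I would use the known closed form of the optimizer, $\sigma_B\propto\big(\tr_A(\omega_A^{\gamma}\omega_{AB}\omega_A^{\gamma})^\alpha\big)^{1/\alpha}$ (Hayashi--Tomamichel), which factorizes over products. Alternatively, and staying inside the paper, I would apply Sion's theorem again to the fixed pair of product inputs and use multiplicativity of $\|\cdot\|_\alpha$ under tensor products. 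Combining Steps 3 and 4 proves the theorem.
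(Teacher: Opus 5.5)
\textbf{Where your proposal matches the paper.} Your Steps 1--3 are the paper's proof. It uses the same rewriting via the first expression in \eqref{eq:cbquasinorma}, the same Sion exchange (Carlen--Lieb convexity in $\rho_A$, concavity in $\sigma_B$), and the same restriction to product $\sigma_{B_1}\ox\sigma_{B_2}$ followed by Theorem~\ref{thm:multiplicativity}. The boundary issue you flag goes away if, as the paper does, you apply Sion to the trace functional $f(\rho_A,\sigma_B)$. That functional is finite and continuous on $\cS(A)\times\cS(B)$, and you pass through the decreasing map $t\mapsto\frac{1}{\alpha-1}\log t$ only afterwards. Your reduction in Step 4 is also the paper's: restrict to product inputs, then use additivity of $\min_{\sigma_B}D_\alpha(\omega_{AB}\|\omega_A\ox\sigma_B)$ on product states. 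The paper imports that additivity directly from \cite[Lemma 7]{HayashiTomamichel2016correlation}.

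\textbf{The gap: your argument for that additivity.} Put $\gamma=\frac{1-\alpha}{2\alpha}$ and $\hat\omega_{AB}:=\omega_A^{\gamma}\omega_{AB}\omega_A^{\gamma}$. By H\"older, your formula $\sigma_B\propto(\tr_A\hat\omega_{AB}^\alpha)^{1/\alpha}$ is the maximizer of $\tr[\hat\omega_{AB}^\alpha(\1_A\ox\sigma_B^{1-\alpha})]$. The functional you actually have to maximize is $\tr[(\sigma_B^{\gamma}\hat\omega_{AB}\sigma_B^{\gamma})^\alpha]$. The two coincide in the commuting case (Sibson's identity) but not in general; Araki--Lieb--Thirring only gives an inequality between them. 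No closed-form minimizer is known for the sandwiched R\'enyi mutual information, so ``the optimizer factorizes'' proves nothing.

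\textbf{Your fallback and how to complete it.} As stated, the fallback has nothing to exchange. For fixed product inputs there is a single optimization over $\sigma_{B_1B_2}$, and multiplicativity of Schatten norms only controls product $\sigma_{B_1}\ox\sigma_{B_2}$, which is the trivial direction. You need a dual variable. One way, suppressing $\1_A\ox$:
\begin{enumerate}[(i)]
\item Write $\tr[(\sigma_B^{\gamma}\hat\omega\sigma_B^{\gamma})^\alpha]^{1/\alpha}=\|\hat\omega^{1/2}\sigma_B^{2\gamma}\hat\omega^{1/2}\|_\alpha$.
\item Use the reverse H\"older formula $\|X\|_\alpha=\inf\{\tr[XY]:Y>0,\ \|Y^{-1}\|_{\alpha/(1-\alpha)}\le 1\}$.
\item Swap $\max_{\sigma_B}$ and $\inf_Y$ by Sion. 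The objective is concave in $\sigma_B$ because $2\gamma\le 1$, and linear in $Y$.
\item Evaluate $\max_{\sigma_B}\tr[\sigma_B^{2\gamma}W_B]=\|W_B\|_s$, where $W_B=\tr_A(\hat\omega^{1/2}Y\hat\omega^{1/2})$ and $s=\frac{\alpha}{2\alpha-1}\in(1,\infty]$.
\item For $\hat\omega=\hat\omega_1\ox\hat\omega_2$, restrict $Y$ to $Y_1\ox Y_2$ and use multiplicativity of $\|\cdot\|_s$ and $\|\cdot\|_{\alpha/(1-\alpha)}$. This gives $\max_{\sigma_{B_1B_2}}\tr[(\sigma^{\gamma}\hat\omega\sigma^{\gamma})^\alpha]\le\prod_i\max_{\sigma_{B_i}}\tr[(\sigma_{B_i}^{\gamma}\hat\omega_i\sigma_{B_i}^{\gamma})^\alpha]$, which is the nontrivial direction.
\end{enumerate}
Alternatively, simply cite the Hayashi--Tomamichel lemma, as the paper does.
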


\begin{proof}
We start with a reformulation of $I_\alpha(\cN)$. By \eqref{eq:renyi-I-def} and Definition~\ref{def:sandwiched-RD}, we can write for the channel $\cN_{A'\rar B}$ that
\begin{align}
I_{\alpha}(\cN)
=&\max_{\rho_{A}}\min_{\sigma_B}
  D_\alpha\left(\sqrt{\rho_A}\cN(\Phi_{AA'})\sqrt{\rho_A}\|\rho_A\ox\sigma_B\right) \nb \\
=&\frac{1}{\alpha-1}\log \min_{\rho_{A}}\max_{\sigma_B}\tr\left[\left(\rho_A^\frac{1}{2\alpha}
  \ox\sigma_B^\frac{1-\alpha}{2\alpha}\right)\cN(\Phi_{AA'})
  \left(\rho_A^\frac{1}{2\alpha}\ox\sigma_B^\frac{1-\alpha}{2\alpha}\right)\right]^\alpha. \label{eq:additivity-1}
\end{align}
Set
\begin{align}
f(\rho_A, \sigma_B)
:=&\tr\left[\left(\rho_A^\frac{1}{2\alpha}\ox\sigma_B^\frac{1-\alpha}{2\alpha}\right)\cN(\Phi_{AA'})
    \left(\rho_A^\frac{1}{2\alpha}\ox\sigma_B^\frac{1-\alpha}{2\alpha}\right)\right]^\alpha \nb \\
 =&\tr\left[\cN(\Phi_{AA'})^\frac{1}{2}\left(\rho_A^\frac{1}{\alpha}\ox\sigma_B^\frac{1-\alpha}{\alpha}\right)
  \cN(\Phi_{AA'})^\frac{1}{2}\right]^\alpha.
\end{align}
Then we have: (a) the function $\rho_A\mapsto f(\rho_A, \sigma_B)$ is convex on the compact and convex set $\cS(A)$ by Lemma~\ref{lem:convexity}, and (b) the function $\sigma_B\mapsto f(\rho_A, \sigma_B)$ is concave on $\cS(B)$ by the operator concavity of $\sigma\mapsto\sigma^\lambda$ for $\lambda\in(0,1)$. Thus, Sion's minimax theorem applies. This allows us to exchange the minimization and the maximization in \eqref{eq:additivity-1} and obtain
\begin{align}
I_{\alpha}(\cN)
=&\min_{\sigma_B}\frac{\alpha}{\alpha-1}\log\min_{\rho_A}\left\|\left(\rho_A^\frac{1}{2\alpha}
  \ox\sigma_B^\frac{1-\alpha}{2\alpha}\right)\cN(\Phi_{AA'})
  \left(\rho_A^\frac{1}{2\alpha}\ox\sigma_B^\frac{1-\alpha}{2\alpha}\right)\right\|_\alpha \nb\\
=&\min_{\sigma_B}\frac{\alpha}{\alpha-1}
  \log\left\|\Gamma_{\sigma_B}^{(\alpha)}\circ\cN\right\|_{\cb,1\rar\alpha}, \label{eq:additivity-2}
\end{align}
where $\Gamma_{\sigma_B}^{(\alpha)}(\cdot):=\sigma_B^\frac{1-\alpha}{2\alpha}(\cdot)
\sigma_B^\frac{1-\alpha}{2\alpha}$, and for the last line we have used \eqref{eq:cbquasinorma}. Now we are ready to show the subadditivity. With \eqref{eq:additivity-2}, we have
\begin{align}
     I_{\alpha}(\cN_1\ox\cN_2)
 =  &\min_{\sigma_{B_1B_2}}\frac{\alpha}{\alpha-1}
     \log\left\|\Gamma_{\sigma_{B_1B_2}}^{(\alpha)}\circ(\cN_1\ox\cN_2)\right\|_{\cb,1\rar\alpha} \nb\\
\leq&\min_{\bar{\sigma}_{B_1},\tilde{\sigma}_{B_2}}\frac{\alpha}{\alpha-1}\log
     \left\|\Gamma_{\bar{\sigma}_{B_1}\ox\tilde{\sigma}_{B_2}}^{(\alpha)}\circ(\cN_1\ox\cN_2)\right\|
     _{\cb,1\rar\alpha} \nb\\
 =  &\min_{\bar{\sigma}_{B_1},\tilde{\sigma}_{B_2}}\frac{\alpha}{\alpha-1}\log
     \left\|\left(\Gamma_{\bar{\sigma}_{B_1}}^{(\alpha)}\circ\cN_1\right)
     \ox\left(\Gamma_{\tilde{\sigma}_{B_2}}^{(\alpha)}\circ\cN_2\right)\right\|_{\cb,1\rar\alpha}.
     \label{eq:additivity-3}
\end{align}
Applying the multiplicativity result of Theorem~\ref{thm:multiplicativity} to \eqref{eq:additivity-3}, we eventually get
\begin{equation}\label{eq:subadditivity}
I_{\alpha}(\cN_1\ox\cN_2) \leq I_{\alpha}(\cN_1)+I_{\alpha}(\cN_2).
\end{equation}

It remains to show the superadditivity. By definition, we have
\begin{align}
     I_{\alpha}(\cN_1\ox\cN_2)
 =  &\max_{\rho_{A_1A_2A'_1A'_2}}\min_{\sigma_{B_1B_2}}
     D_\alpha\left(\cN_1\ox\cN_2(\rho_{A_1A_2A'_1A'_2})\big\|\rho_{A_1A_2}\ox\sigma_{B_1B_2}\right) \nb\\
\geq&\max_{\bar{\rho}_{A_1A'_1},\tilde{\rho}_{A_2A'_2}}\min_{\sigma_{B_1B_2}}
     D_\alpha\left(\cN_1(\bar{\rho}_{A_1A'_1})\ox\cN_2(\tilde{\rho}_{A_2A'_2})
     \big\|(\bar{\rho}_{A_1}\ox\tilde{\rho}_{A_2})\ox\sigma_{B_1B_2}\right).
     \label{eq:additivity-4}
\end{align}
Hayashi and Tomamichel~\cite[Lemma 7]{HayashiTomamichel2016correlation} have proved the additivity of quantum R\'enyi mutual information for product states. That is,
\begin{align}
 &\min_{\sigma_{B_1B_2}}
  D_\alpha\left(\cN_1(\bar{\rho}_{A_1A'_1})\ox\cN_2(\tilde{\rho}_{A_2A'_2})
    \big\|(\bar{\rho}_{A_1}\ox\tilde{\rho}_{A_2})\ox\sigma_{B_1B_2}\right) \nb\\
=&\min_{\sigma_{B_1}}D_\alpha\left(\cN_1(\bar{\rho}_{A_1A'_1})\big\|\bar{\rho}_{A_1}\ox\sigma_{B_1}\right)
 +\min_{\sigma_{B_2}}D_\alpha\left(\cN_2(\tilde{\rho}_{A_2A'_2})\big\|\tilde{\rho}_{A_2}\ox\sigma_{B_2}\right).
  \label{eq:additivity-5}
\end{align}
Inserting \eqref{eq:additivity-5} into \eqref{eq:additivity-4} results in
\begin{equation}\label{eq:supadditivity}
I_{\alpha}(\cN_1\ox\cN_2) \geq I_{\alpha}(\cN_1)+I_{\alpha}(\cN_2),
\end{equation}
and we are done.
\end{proof}

\section{Additivity of the Channel Dispersions}
Let $\rho$ and $\sigma$ be two quantum states on the same Hilbert space. the quantum relative entropy of $\rho$ and $\sigma$ is defined as~\cite{Umegaki1954conditional}
\begin{equation}
D(\rho\|\sigma):= \begin{cases}
	\tr(\rho(\log\rho-\log\sigma)) & \text{ if }\supp(\rho)\subseteq\supp(\sigma), \\
	+\infty                        & \text{ otherwise.}
\end{cases}
\end{equation}
Suppose that $\supp(\rho)\subseteq\supp(\sigma)$. The relative entropy variance is defined as~\cite{TomamichelHayashi2013hierarchy,Li2014second} 
\begin{equation}
\label{eq:relative-variance}
V(\rho\|\sigma):=\tr\rho(\log\rho-\log\sigma)^2-(D(\rho\|\sigma))^2.
\end{equation}
For a bipartite quantum state $\rho_{AB}$, the mutual information is given by
\begin{align}
I(A:B)_\rho
:=& D(\rho_{AB} \| \rho_A \ox \rho_B) \nb\\
 =& H(A)_\rho-H(AB)_\rho+H(B)_\rho,           
\end{align}
Where $H(A)_\rho=-\tr(\rho_A\log\rho_A)$ is the von Neumann entropy of the state $\rho_A$. We define the mutual information of a quantum channel $\cN_{A'\rar B}$ as
\begin{equation}\label{eq:mutualI-def}
I(\cN):=\max\left\{I(A:B)_{\cN(\rho_{AA'})}~|~\rho_{AA'}\in\cS(AA')\right\}.
\end{equation}
This quantity quantifies the communication capacity~\cite{BSST1999entanglement,BSST2002entanglement} and simulation cost~\cite{BDHSW2014quantum, BCR2011the} of the channel $\cN_{A'\rar B}$. Before defining the channel dispersions, we let
\begin{equation}\label{eq:capacityachset}
\cO_\cN:=\left\{\id_A\ox\cN(\rho_{AA'})~|~I(A:B)_{\cN(\rho_{AA'})}=I(\cN)\right\}
\end{equation}
denote the set of bipartite states that are generated by the channel $\cN_{A'\rar B}$ and attain the channel mutual information of Eq.~\eqref{eq:mutualI-def}. 

\begin{definition}
The maximal and the minimal dispersions of the channel $\cN_{A'\rar B}$ are defined respectively as
\begin{align}
V_{\rm max}(\cN)&:=\max_{\rho_{AB}\in\cO_\cN}V(\rho_{AB}\|\rho_A\ox\rho_B), \\
V_{\rm min}(\cN)&:=\min_{\rho_{AB}\in\cO_\cN}V(\rho_{AB}\|\rho_A\ox\rho_B).
\end{align}
\end{definition}

\begin{remark}
In the above definition of the channel dispersions, we do not restrict the state $\rho_{AA'}$ in Eq.~\eqref{eq:capacityachset} to be pure. This is in contrast to the definition of the channel mutual information in Eq.~\eqref{eq:mutualI-def}, where without loss of generality we can assume that the state $\rho_{AA'}$ is pure, due to the data processing inequality of the relative entropy. 
\end{remark}

The main result of this section is the following additivity property. We point out that the additivity of $V_{\rm max}$ of Eq.~\eqref{eq:vmaxadd} is already implied by the work~\cite{RTB2023moderate}, and we provide an alternative proof here.
\begin{theorem}\label{thm:additivity2}
Let $\cN_1:\mathcal{L}(A_1')\rar\mathcal{L}(B_1)$ and $\cN_2:\mathcal{L}(A_2')\rar\mathcal{L}(B_2)$ be two quantum channels. We have
\begin{align}
	V_{\rm max}(\cN_1\ox\cN_2)&=V_{\rm max}(\cN_1)+V_{\rm max}(\cN_2), \label{eq:vmaxadd} \\
	V_{\rm min}(\cN_1\ox\cN_2)&=V_{\rm min}(\cN_1)+V_{\rm min}(\cN_2). \label{eq:vminadd}
\end{align}
\end{theorem}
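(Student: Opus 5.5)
The plan is to prove the two inequalities of each identity separately. The ``easy'' halves come from product inputs. The ``hard'' halves come from a structural characterization of the optimal set $\cO_{\cN_1\ox\cN_2}$ via equality conditions in entropy inequalities.

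\emph{Easy direction.} I first recall that $I(\cN_1\ox\cN_2)=I(\cN_1)+I(\cN_2)$. The inequality ``$\le$'' follows from the chain-rule argument of the next step, and ``$\ge$'' from product inputs. Hence if $\rho^1_{A_1B_1}\in\cO_{\cN_1}$ and $\rho^2_{A_2B_2}\in\cO_{\cN_2}$, then $\rho^1\ox\rho^2\in\cO_{\cN_1\ox\cN_2}$. For product states, $\log(\rho^1\ox\rho^2)-\log(\rho^1_{A_1}\ox\rho^2_{A_2}\ox\rho^1_{B_1}\ox\rho^2_{B_2})$ splits as $L_1\ox\1+\1\ox L_2$ with commuting summands that are independent under $\rho^1\ox\rho^2$. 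Therefore $V$ is additive, which gives $V_{\rm max}(\cN_1\ox\cN_2)\ge V_{\rm max}(\cN_1)+V_{\rm max}(\cN_2)$ and $V_{\rm min}(\cN_1\ox\cN_2)\le V_{\rm min}(\cN_1)+V_{\rm min}(\cN_2)$.

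\emph{Structure of optimal states.} Let $\omega_{AB_1B_2}=\id_A\ox\cN_1\ox\cN_2(\rho_{AA_1'A_2'})\in\cO_{\cN_1\ox\cN_2}$. Using
\begin{equation*}
I(A:B_1B_2)=I(A:B_1)+I(AB_1:B_2)-I(B_1:B_2),
\end{equation*}
together with $I(A:B_1)\le I(\cN_1)$ and $I(AB_1:B_2)\le I(\cN_2)$ (each is the mutual information of a state produced by one channel acting on a joint state with some reference), optimality forces $I(B_1:B_2)=0$, $I(A:B_1)=I(\cN_1)$ and $I(AB_1:B_2)=I(\cN_2)$. The symmetric decomposition gives $I(AB_2:B_1)=I(\cN_1)=I(A:B_1)$, i.e.\ $I(B_1:B_2|A)_\omega=0$. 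I then invoke the Hayden--Jozsa--Petz--Winter structure theorem for Markov states. It gives a decomposition $A=\bigoplus_j a_j^L\ox a_j^R$ with
\begin{equation*}
\omega_{AB_1B_2}=\bigoplus_j p_j\,\omega^j_{a^L_jB_1}\ox\omega^j_{a^R_jB_2}.
\end{equation*}

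\emph{Block analysis (main step).} Blockwise, the log-likelihood operator $\log\omega_{AB_1B_2}-\log(\omega_A\ox\omega_{B_1}\ox\omega_{B_2})$ equals $\bigoplus_j(L^j_1\ox\1+\1\ox L^j_2)$, where
\begin{equation*}
L^j_1=\log\omega^j_{a^LB_1}-\log\omega^j_{a^L}\ox\1-\1\ox\log\omega_{B_1},
\end{equation*}
and similarly for $L^j_2$; the $\log p_j$ terms cancel. Set $D^j_1:=D(\omega^j_{a^LB_1}\|\omega^j_{a^L}\ox\omega_{B_1})$. Then $I(\cN_1)=I(A:B_1)=\sum_jp_jD^j_1$. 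On the other hand,
\begin{equation*}
D^j_1\ge I(a^L:B_1)_{\omega^j}\le I(\cN_1),
\end{equation*}
where the first inequality has equality iff $\omega^j_{B_1}=\omega_{B_1}$, and the bound by $I(\cN_1)$ holds because $\omega^j_{a^LB_1}$ is an output of $\cN_1$. I expect the main obstacle to be showing that $D^j_1\le I(\cN_1)$ for every $j$ and then forcing $D^j_1=I(\cN_1)$ and $\omega^j_{B_1}=\omega_{B_1}$ for all $j$. I would get this by noting that averaging the inequalities $I(a^L:B_1)_{\omega^j}\le I(\cN_1)$ against the identity above leaves no slack. Granting this, each $\omega^j_{a^LB_1}\in\cO_{\cN_1}$, and analogously each $\omega^j_{a^RB_2}\in\cO_{\cN_2}$.

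Computing $V(\omega\|\omega_A\ox\omega_B)$ blockwise, the cross term is $\sum_jp_jD^j_1D^j_2=I(\cN_1)I(\cN_2)$, so the covariance vanishes. The result is
\begin{equation*}
V=\sum_jp_j\bigl(V(\omega^j_{a^LB_1}\|\omega^j_{a^L}\ox\omega^j_{B_1})+V(\omega^j_{a^RB_2}\|\omega^j_{a^R}\ox\omega^j_{B_2})\bigr),
\end{equation*}
which lies between $V_{\rm min}(\cN_1)+V_{\rm min}(\cN_2)$ and $V_{\rm max}(\cN_1)+V_{\rm max}(\cN_2)$. This yields the remaining inequalities in \eqref{eq:vmaxadd} and \eqref{eq:vminadd}.
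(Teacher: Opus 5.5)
Your easy direction is fine. Your structural step is also correct, and more direct than the paper's: the identity $I(A:B_1B_2)=I(A:B_1)+I(AB_1:B_2)-I(B_1:B_2)$ and its mirror image give $I(B_1:B_2)=0$ and $I(B_1:B_2|A)=0$. So one application of HJPW on the reference side replaces the paper's two applications on the environment side followed by purification.

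The gap is in the block analysis, exactly at the step you flagged. Averaging does not close it. From $D^j_1=I(a^L_j:B_1)_{\omega^j}+D(\omega^j_{B_1}\|\omega_{B_1})$, $I(a^L_j:B_1)_{\omega^j}\le I(\cN_1)$ and $\sum_jp_jD^j_1=I(\cN_1)$ you only get $\sum_jp_jD(\omega^j_{B_1}\|\omega_{B_1})=I(\cN_1)-\sum_jp_jI(a^L_j:B_1)_{\omega^j}\ge 0$. This is vacuous, because your two inequalities bound $D^j_1$ from opposite sides. The missing ingredient is an upper bound $D(\tau_{RB_1}\|\tau_R\ox\omega_{B_1})\le I(\cN_1)$ valid for \emph{every} output $\tau_{RB_1}$ of $\cN_1$, i.e.\ the saddle-point property of the optimal output marginal. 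The paper gets it from Lemma~\ref{lem:div-radius}: Sion's minimax theorem gives $I(\cN)=\min_{\sigma_B}\max_{\rho_{AA'}}D(\cN(\rho_{AA'})\|\rho_A\ox\sigma_B)$, and every element of $\cO_\cN$ has $B$-marginal equal to the minimizer $\sigma^*_B$. Since $\omega_{AB_1}\in\cO_{\cN_1}$, this yields $\omega_{B_1}=\sigma^*_{B_1}$ and hence $D^j_1\le I(\cN_1)$ for all $j$. Only then does averaging force $D^j_1=I(\cN_1)$ (and likewise $D^j_2=I(\cN_2)$), which is what makes the covariance term vanish.

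Even granting this, your further claims $\omega^j_{B_1}=\omega_{B_1}$ and $\omega^j_{a^L_jB_1}\in\cO_{\cN_1}$ are false in general, and so is your final formula for $V$. Let $\cN_1$ be a binary symmetric channel $\rho\mapsto\sum_x\bra{x}\rho\ket{x}W_x$ with $W_0=(1-\epsilon)\proj{0}+\epsilon\proj{1}$, $W_1=\epsilon\proj{0}+(1-\epsilon)\proj{1}$, $\epsilon\in(0,\frac12)$. Let $\cN_2$ be a replacement channel $\rho\mapsto\tr(\rho)\tau$. Then $\omega=\frac12\sum_x\proj{x}_A\ox W_x\ox\tau$ (maximally entangled input on $AA_1'$) lies in $\cO_{\cN_1\ox\cN_2}$. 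Splitting $A$ into one-dimensional blocks is a decomposition of exactly the HJPW form. In it, $\omega^j_{B_1}=W_x\neq\omega_{B_1}$ and $I(a^L_j:B_1)=0$, although $D^j_1=I(\cN_1)$. Your formula returns $V=0$, whereas the true value is $\epsilon(1-\epsilon)\log^2\frac{1-\epsilon}{\epsilon}>0$.

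The correct endgame, as in the paper, is not to treat the blocks individually but to reassemble them into the flagged state $\tilde\omega_{\tilde A_1B_1}:=\bigoplus_jp_j\,\omega^j_{a^L_jB_1}$ on $\tilde A_1=\bigoplus_ja^L_j$. It arises from $\omega_{AB_1}$ by a local channel on $A$, so it is an output of $\cN_1$. Its $B_1$-marginal is $\omega_{B_1}$ and its mutual information is $\sum_jp_jD^j_1=I(\cN_1)$, so it lies in $\cO_{\cN_1}$. Because the $D^j_i$ are constant in $j$, the cross term $2\sum_jp_jD^j_1D^j_2$ equals $2I(\cN_1)I(\cN_2)$. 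Hence $V(\omega\|\omega_A\ox\omega_{B_1B_2})=V(\tilde\omega_{\tilde A_1B_1}\|\tilde\omega_{\tilde A_1}\ox\omega_{B_1})+V(\tilde\omega_{\tilde A_2B_2}\|\tilde\omega_{\tilde A_2}\ox\omega_{B_2})$, which gives the remaining inequalities.
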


In the remaining part of this section, we first establish two technical lemmas, and then use them to complete the proof of Theorem~\ref{thm:additivity2}.

\begin{lemma}\label{lem:div-radius}
For a quantum channel $\cN_{A'\rar B}$, there exists a unique state $\sigma_B^*$ such that
\begin{equation}\label{eq:div-radius}
I(\cN)=\max_{\rho_{AA'}}D(\cN(\rho_{AA'})\|\rho_A\ox\sigma_B^*).
\end{equation}
Furthermore, if $\rho_{AB}\in\cO_\cN$, then $\rho_B=\sigma_B^*$.
\end{lemma}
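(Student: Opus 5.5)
This is the standard divergence-radius characterization of the channel mutual information, and I would prove it by a minimax argument followed by a strict-convexity argument.

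\textbf{Step 1 (minimax).} Write $\rho_{AA'}=\sqrt{\rho_A}\,\Phi_{AA'}\sqrt{\rho_A}$ after purification; by data processing, restricting to pure inputs loses nothing. For any $\sigma_B$ we have the identity
\begin{equation*}
D(\cN(\rho_{AA'})\|\rho_A\ox\sigma_B)=I(A:B)_{\cN(\rho)}+D(\rho_B\|\sigma_B),
\end{equation*}
where $\rho_B=\cN(\rho_{A'})$. Hence $I(A:B)_{\cN(\rho)}=\min_{\sigma_B}D(\cN(\rho_{AA'})\|\rho_A\ox\sigma_B)$, and the minimum is attained only at $\sigma_B=\rho_B$.

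Next, express everything in terms of $\rho_{A'}$. Using the complementary channel, $I(A:B)=H(\rho_{A'})+H(\cN(\rho_{A'}))-H(\cN^C(\rho_{A'}))$. It follows that
\begin{equation*}
g(\rho_{A'},\sigma_B):=D(\cN(\rho_{AA'})\|\rho_A\ox\sigma_B)=H(\rho_{A'})-H(\cN^C(\rho_{A'}))-\tr\,\cN(\rho_{A'})\log\sigma_B .
\end{equation*}
This function is convex in $\sigma_B$, since $-\log$ is operator convex. It is concave in $\rho_{A'}$, because $I(A:B)$ is concave in the input state (a standard fact) and $\rho_{A'}\mapsto D(\cN(\rho_{A'})\|\sigma_B)$ is convex.

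Sion's minimax theorem then gives
\begin{equation*}
I(\cN)=\max_{\rho}\min_{\sigma}g=\min_{\sigma}\max_{\rho}g .
\end{equation*}
To handle the noncompactness caused by $\log\sigma_B$ blowing up, I would minimize $\sigma$ over the compact set $\cS(B)$ and allow the value $+\infty$, or restrict to $\sigma\ge\epsilon\1$ and let $\epsilon\to0$. Lower semicontinuity in $\sigma$ ensures the outer minimum is attained at some $\sigma_B^*$, and this $\sigma_B^*$ satisfies \eqref{eq:div-radius}.

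\textbf{Step 2 ($\rho_B=\sigma_B^*$ for optimizers).} Let $\rho_{AB}=\cN(\rho_{AA'})\in\cO_\cN$, with input $\rho_{AA'}$ possibly mixed. By data processing, the purification of $\rho_{A'}$ also achieves $I(\cN)$, and it has the same $\rho_B$. Then
\begin{equation*}
I(\cN)\ge D(\cN(\rho_{AA'})\|\rho_A\ox\sigma_B^*)=I(\cN)+D(\rho_B\|\sigma_B^*),
\end{equation*}
where the inequality holds because $\sigma_B^*$ attains the max-min value. Hence $D(\rho_B\|\sigma_B^*)=0$, which forces $\rho_B=\sigma_B^*$.

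\textbf{Uniqueness.} Suppose $\sigma_1,\sigma_2$ both satisfy \eqref{eq:div-radius}. Apply the Step 2 argument to each: the set $\cO_\cN$ is nonempty by compactness, so choosing any $\rho\in\cO_\cN$ gives $\rho_B=\sigma_1$ and $\rho_B=\sigma_2$. Therefore $\sigma_1=\sigma_2$.

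\textbf{Main obstacle.} The delicate points are the concavity claim and applying Sion's theorem with an extended-valued, only lower semicontinuous function. For the former, concavity of $I(A:B)$ in $\rho_{A'}$ follows from $H(\rho_{A'})-H(\cN^C(\rho_{A'}))$ being a coherent information of the degradable-from-identity structure; equivalently, one can use the standard Holevo/Petz argument through the representation $I(A:B)=\min_\sigma D(\cdot)$ written over ensembles. For the latter, I would verify lower semicontinuity explicitly and use the $\epsilon$-regularization to justify exchanging min and max.
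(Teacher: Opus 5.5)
The overall architecture matches the paper: Sion's minimax theorem gives existence, and the identity $D(\rho_{AB}\|\rho_A\ox\sigma_B)=I(A:B)_\rho+D(\rho_B\|\sigma_B)$, applied to an optimizer, forces $\rho_B=\sigma_B^*$ and hence uniqueness. There is, however, a gap in the one nontrivial ingredient, the concavity of $g$ in $\rho_{A'}$. You justify it by saying that $I(A:B)$ is concave and that $\rho_{A'}\mapsto D(\cN(\rho_{A'})\|\sigma_B)$ is convex. By your own identity, $g=I(A:B)_{\cN(\rho)}+D(\cN(\rho_{A'})\|\sigma_B)$, so this is concave plus convex, which proves nothing. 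Concavity of $I(A:B)$ is in fact too weak: since $I=\min_{\sigma_B} g$, concavity of every $g(\cdot,\sigma_B)$ is a strictly stronger property. Your ``Main obstacle'' paragraph does not repair this, because it again aims at concavity of $I(A:B)$ rather than of $g$. Sion's theorem needs (quasi-)concavity in $\rho_{A'}$. Without it you only have weak duality $\min_{\sigma}\max_{\rho} g\ge I(\cN)$, which does not produce a $\sigma_B^*$ attaining $I(\cN)$.

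The repair is already contained in your formula for $g$. The term $-\tr\,\cN(\rho_{A'})\log\sigma_B$ is affine in $\rho_{A'}$, so it suffices that $H(\rho_{A'})-H(\cN^C(\rho_{A'}))$ is concave. Put $\tau_{BE}=U\rho_{A'}U^*$. Then this quantity equals $H(BE)_\tau-H(E)_\tau=H(B|E)_\tau$. Conditional entropy is concave in the state: write $H(B|E)_\tau=-D(\tau_{BE}\|\1_B\ox\tau_E)$ and use joint convexity of $D$. Since $\tau$ depends affinely on $\rho_{A'}$, concavity of $g$ follows.

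The paper instead proves concavity of $g$ directly. It purifies $\lambda_1\rho^{(1)}_{A'}+\lambda_2\rho^{(2)}_{A'}$ with a flag register $\bar{A}$ and uses monotonicity of the relative entropy to compare with $\lambda_1 g(\rho^{(1)}_{A'},\sigma_B)+\lambda_2 g(\rho^{(2)}_{A'},\sigma_B)$.

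With either argument in place, the rest of your proof is correct and coincides with the paper's. Your $\epsilon$-regularization and lower-semicontinuity handling of rank-deficient $\sigma_B$ is more explicit than the paper's unremarked passage from $\inf_{\cS^+(B)}$ to $\min_{\cS(B)}$.
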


\begin{proof}
Let $\rho_{A'}\in\cS(A')$ and $\sigma_B\in\cS(B)$. Let $\rho_{AA'}$ be a purification of $\rho_{A'}$. Define
\begin{equation}
f(\rho_{A'},\sigma_B):=D\left(\id_A\ox\cN(\rho_{AA'})\|\rho_A\ox\sigma_B\right).
\end{equation}
At first, we show that the function $\rho_{A'}\mapsto f(\rho_{A'},\sigma_B)$ is concave. Suppose that $\rho_{A'}=\lambda_1\rho_{A'}^{(1)}+\lambda_2\rho_{A'}^{(2)}$, where $\rho_{A'}^{(1)}$ and $\rho_{A'}^{(2)}$ are density matrices, and $\lambda_1$ and $\lambda_2$ are nonnegative numbers such that $\lambda_1+\lambda_2=1$. Let $\rho_{AA'}^{(1)}$ and $\rho_{AA'}^{(2)}$ be purifications of $\rho_{A'}^{(1)}$ and $\rho_{A'}^{(2)}$, respectively. Denote the vector form of a pure state $\rho$ by $\ket{\rho}$. Then the state $\rho_{\bar{A}AA'}$ given by
\begin{equation}
\ket{\rho_{\bar{A}AA'}}
=\sqrt{\lambda_1}\ket{1}_{\bar{A}}\ket{\rho^{(1)}}_{AA'}+\sqrt{\lambda_2}\ket{2}_{\bar{A}}\ket{\rho^{(2)}}_{AA'}
\end{equation}
is a purification of $\rho_{A'}$. We have
\begin{align}
    &f(\rho_{A'},\sigma_B) \nb\\
  = &D(\id_{\bar{A}A}\ox\cN(\rho_{\bar{A}AA'})\|\rho_{\bar{A}A}\ox\sigma_B) \nb\\
\geq&D(\id_{A}\ox\cN(\rho_{AA'})\|\rho_{A}\ox\sigma_B) \nb\\
  = &\lambda_1D\left(\id_A\ox\cN(\rho_{AA'}^{(1)})\|\rho_A^{(1)}\ox\sigma_B\right)+
     \lambda_2D\left(\id_A\ox\cN(\rho_{AA'}^{(2)})\|\rho_A^{(2)}\ox\sigma_B\right) \nb\\
  = &\lambda_1f(\rho_{A'}^{(1)},\sigma_B) + \lambda_2f(\rho_{A'}^{(2)},\sigma_B),
\end{align}
where the inequality is by the monotonicity of the relative entropy under partial trace~\cite{Lindblad1975completely,Uhlmann1977relative}, and the fourth line is by direct computation. This confirms that the function $\rho_{A'}\mapsto f(\rho_{A'},\sigma_B)$ is indeed concave. It is obviously continuous for any $\sigma_B\in\cS^+(B)$. On the other hand, the function $\sigma_{B}\mapsto f(\rho_{A'},\sigma_B)$ is convex due to the operator concavity of the logarithm, and it is obviously continuous on $\cS^+(B)$ for any $\rho_{A'}\in\cS(A')$. So, we can apply Sion's minimax theorem to deduce that
\begin{align}
I(\cN)&=\max_{\rho_{A'}\in\cS(A')}\inf_{\sigma_B\in\cS^+(B)} f(\rho_{A'},\sigma_B) \nb\\
      &=\inf_{\sigma_B\in\cS^+(B)}\max_{\rho_{A'}\in\cS(A')} f(\rho_{A'},\sigma_B) \nb\\
      &=\min_{\sigma_B\in\cS(B)}\max_{\rho_{AA'}\in\cS(AA')}D(\cN(\rho_{AA'})\|\rho_A\ox\sigma_B). \label{eq:div-radius-1}      
\end{align}
If the minimizer in the last line of Eq.~\eqref{eq:div-radius-1} is unique, then the first part of the statement follows. We will show that this is really true, which also implies the second part of the statement. Let $\sigma^*_B$ be a minimizer of the last line of Eq.~\eqref{eq:div-radius-1} and let $\rho^*_{AB}\in\cO_\cN$. By Eq.~\eqref{eq:div-radius-1} we have
\begin{align}
I(\cN)&=\max_{\rho_{AA'}}D(\cN(\rho_{AA'})\|\rho_A\ox\sigma_B^*) \nb\\
      &\geq D(\rho^*_{AB}\|\rho^*_A\ox\sigma_B^*). \label{eq:div-radius-2} 
\end{align}
On the other hand, by the definition of $I(\cN)$, we also have
\begin{align}
I(\cN)&=D(\rho^*_{AB}\|\rho^*_A\ox\rho_B^*) \nb\\
	  &=D(\rho^*_{AB}\|\rho^*_A\ox\sigma_B^*)-D(\rho_B^*\|\sigma_B^*).
	  \label{eq:div-radius-3} 
\end{align}
Equations~\eqref{eq:div-radius-2} and \eqref{eq:div-radius-3} implies that $D(\rho_B^*\|\sigma_B^*)=0$, which is equivalent to $\rho_B^*=\sigma_B^*$. This completes the proof.
\end{proof}

\begin{lemma}\label{lem:structure}
Let $\rho_{AB_1B_2}=\id_{A}\ox\cN_1\ox\cN_2(\rho_{AA'_1A'_2})$. Suppose
\begin{equation}
I(A:B_1B_2)_{\rho}=I(\cN_1\ox\cN_2).
\end{equation}
Then there is a local isometry $W:A\to\bar{A}_1\bar{A}_2\hat{A}A_1A_2$, such that
\begin{equation}\label{eq:structure}
W\rho_{AB_1B_2}W^\dagger=\sum_{x,y}p_{x,y}\proj{x,y}_{\bar{A}_1\bar{A}_2}\ox\rho^{(x,y)}_{\hat{A}}
                          \ox\rho^{(x)}_{A_1B_1}\ox\rho^{(y)}_{A_2B_2},
\end{equation}
where $\{p_{x,y}\}$ is a probability distribution, and $\{\ket{x,y}\}$ is an orthonormal basis. 
\end{lemma}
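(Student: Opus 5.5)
The plan is to show that optimality for $\cN_1\ox\cN_2$ forces $B_1$ and $B_2$ to be conditionally independent given $A$, i.e.\ $B_1-A-B_2$ is a quantum Markov chain. The structure theorem for states saturating strong subadditivity (Hayden, Jozsa, Petz and Winter) then gives the decomposition \eqref{eq:structure}.

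\emph{Step 1: chain of inequalities.} Let $\rho_{AB_1B_2}=\id_A\ox\cN_1\ox\cN_2(\rho_{AA'_1A'_2})$. Define the intermediate states
\begin{equation}
\omega_{AA'_2B_1}:=\id_{AA'_2}\ox\cN_1(\rho_{AA'_1A'_2}),\qquad
\tau_{AA'_1B_2}:=\id_{AA'_1}\ox\cN_2(\rho_{AA'_1A'_2}).
\end{equation}
By the chain rule,
\begin{equation}
I(A:B_1B_2)_\rho=I(A:B_2)_\rho+I(A:B_1|B_2)_\rho .
\end{equation}
We bound the two terms separately.
\begin{itemize}
\item $I(A:B_1|B_2)_\rho=I(AB_2:B_1)_\rho-I(B_1:B_2)_\rho\le I(AB_2:B_1)_\rho$.
\item $I(AB_2:B_1)_\rho\le I(AA'_2:B_1)_\omega$ by data processing, since $\cN_2$ acts on the reference side $A'_2$ of $\omega$.
\item $I(AA'_2:B_1)_\omega\le I(\cN_1)$, since $\omega$ is $\cN_1$ applied to a state on $(AA'_2)A'_1$.
\item Similarly, $I(A:B_2)_\rho\le I(AB_1:B_2)_\rho\le I(AA'_1:B_2)_\tau\le I(\cN_2)$, where the first inequality is monotonicity of mutual information under discarding $B_1$.
\end{itemize}
Hence $I(\cN_1\ox\cN_2)\le I(\cN_1)+I(\cN_2)$. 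Product inputs give the reverse inequality, because mutual information is additive on tensor products. So $I(\cN_1\ox\cN_2)=I(\cN_1)+I(\cN_2)$.

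\emph{Step 2: reading off Markovianity.} Under the hypothesis $I(A:B_1B_2)_\rho=I(\cN_1\ox\cN_2)$, every inequality in the chain is an equality. In particular:
\begin{itemize}
\item $I(B_1:B_2)_\rho=0$;
\item $I(A:B_2)_\rho=I(AB_1:B_2)_\rho$. Since $I(AB_1:B_2)=I(A:B_2)+I(B_1:B_2|A)$, this is equivalent to $I(B_1:B_2|A)_\rho=0$.
\end{itemize}
The second condition is exactly equality in strong subadditivity with $A$ as the conditioning system.

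\emph{Step 3: structure theorem.} By Hayden--Jozsa--Petz--Winter, $I(B_1:B_2|A)_\rho=0$ implies a decomposition $A=\bigoplus_j a_j^L\ox a_j^R$ such that
\begin{equation}
\rho_{AB_1B_2}=\bigoplus_j q_j\,\rho^{(j)}_{a_j^LB_1}\ox\rho^{(j)}_{a_j^RB_2}.
\end{equation}
Now define the isometry $W$ as follows:
\begin{itemize}
\item take $\hat A$ one-dimensional;
\item let $A_1$ and $A_2$ have dimensions $\max_j\dim a_j^L$ and $\max_j\dim a_j^R$;
\item on the block $a_j^L\ox a_j^R$, set $W=\ket{j}_{\bar A_1}\ket{j}_{\bar A_2}\ox(V_j^L\ox V_j^R)$, where $V_j^L,V_j^R$ are fixed isometric embeddings into $A_1,A_2$.
\end{itemize}
The blocks are mutually orthogonal and land in orthogonal flag subspaces, so $W$ is an isometry. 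Conjugating gives \eqref{eq:structure} with $p_{x,y}=q_x\delta_{x,y}$, $\rho^{(x)}_{A_1B_1}=(V^L_x\ox\1)\rho^{(x)}_{a_x^LB_1}(V^L_x\ox\1)^\dagger$, and likewise for $\rho^{(y)}_{A_2B_2}$.

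The only substantive input is the structure theorem in Step 3. The main point needing care is Step 2: checking that the equality $I(A:B_2)=I(AB_1:B_2)$, and not merely $I(B_1:B_2)=0$, is really extracted from the chain, since only the conditional-independence statement feeds the structure theorem. The rest is bookkeeping of which data-processing step is tight.
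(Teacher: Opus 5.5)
Your proof is correct, but it reaches the lemma by a different route from the paper's.

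\textbf{The paper's route.} It works on the environment side. Via Stinespring dilations it writes $I(A:B)=H(B|E)+H(B)$ and gets additivity from subadditivity of entropy and of conditional entropy. It then reads off $H(B_1B_2|E_1E_2)=H(B_1|E_1)+H(B_2|E_2)$, i.e.\ $I(B_1E_1;B_2|E_2)=I(B_1;E_2|E_1)=0$, and applies the Hayden--Jozsa--Petz--Winter theorem twice to decompose $E_1$ and $E_2$. Finally it transfers the structure to $A$: it writes down a purification of $\rho_{B_1B_2\tilde E_1\tilde E_2}$ and uses that two purifications differ by an isometry on the purifying system.

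\textbf{Your route.} You stay on the reference side. The chain rule plus data processing gives additivity and, at the same time, the Markov condition $I(B_1:B_2|A)_\rho=0$. One application of the structure theorem, with $A$ as the conditioning system, then produces $W$ directly, with $\hat A$ trivial and $p_{x,y}=q_x\delta_{x,y}$. The statement is existential, and both forms are equivalent to $I(B_1:B_2|A)_\rho=0$. So your degenerate form is all that is needed, and it works equally well in the later variance computation.

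\textbf{What each route buys.} Yours is shorter, and it never uses purity of the input. The paper's argument relies on $(U_{\cN_1}\ox U_{\cN_2})\rho_{AA'_1A'_2}(U_{\cN_1}\ox U_{\cN_2})^\dagger$ being pure, both for $I(A:B)=H(B|E)+H(B)$ and for the uniqueness-of-purification step. As written it therefore covers only pure $\rho_{AA'_1A'_2}$, whereas $\cO_\cN$ is explicitly allowed to contain outputs of mixed inputs. In exchange, the paper's route gives a somewhat finer picture: independent labels $x$ and $y$ come from the separate decompositions of $E_1$ and $E_2$, and the part depending on both labels is isolated in $\hat A$.
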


\begin{proof}
Let $U_\cN:A'\to BE$ be the Stinespring dilation of the channel $\cN$. For the state $\sigma_{ABE}=U_\cN\sigma_{AA'}U_\cN^\dagger$ we have
\begin{align}
I(A:B)&=H(A)-H(AB)+H(B) \nb\\
      &=H(BE)-H(E)+H(B) \nb\\
      &=H(B|E)+H(B),
\end{align}
where $H(B|E)=H(BE)-H(B)$ is called the conditional entropy. This gives an alternative formula for the channel mutual information:
\begin{equation}\label{eq:cmi-alt}
I(\cN)=\max\big\{H(B|E)_\sigma+H(B)_\sigma~\big|~\sigma_{BE}=U_\cN\sigma_{A'}U_\cN^\dagger\big\}.
\end{equation}

Let $U_{\cN_1}:A_1'\to B_1E_1$ and $U_{\cN_2}:A_2'\to B_2E_2$ be the Stinespring dilations of the channels $\cN_1$ and $\cN_2$, respectively. Then $\rho_{AB_1B_2E_1E_2}=(U_{\cN_1}\ox U_{\cN_2})\rho_{AA_1'A_2'}(U_{\cN_1}\ox U_{\cN_2})^\dagger$ is a purification of $\rho_{AB_1B_2}$. Therefore, as was shown in~\cite{AdamiCerf1997neumann},
\begin{align}\label{eq:cmi-add-1}
I(\cN_1\ox\cN_2)&=H(B_1B_2|E_1E_2)+H(B_1B_2) \nb\\
                &\leq H(B_1|E_1)+H(B_1) + H(B_2|E_2)+H(B_2) \nb\\
                &\leq I(\cN_1) + I(\cN_2),
\end{align}
where the second line is by the subadditivity of the von Neumann entropy and the conditional entropy~\cite{Wilde2013quantum}. This actually implies the additivity $I(\cN_1\ox\cN_2)=I(\cN_1) + I(\cN_2)$, as the ``$\geq$'' direction is obvious. So, the inequalities in Eq.~\eqref{eq:cmi-add-1} should be equalities. In particular, we have
\begin{equation}\label{eq:cmi-add-plus}
H(B_1B_2|E_1E_2)_\rho=H(B_1|E_1)_\rho + H(B_2|E_2)_\rho.
\end{equation}
The strong subadditivity of von Neumann entropy~\cite{LiebRuskai1973proof} states that the conditional mutual information of a tripartite state $\rho_{XYZ}$ is always non-negative:
\begin{equation}\label{eq:cmi-ssa}
	I(X:Z|Y):=H(XY)+H(ZY)-H(XYZ)-H(Z)\geq 0.
\end{equation}
Using conditional mutual information, we can write Eq.~\eqref{eq:cmi-add-plus} in the following equivalent form:
\begin{align}
I(B_1E_1;B_2|E_2)_\rho=0, \label{eq:cmi-add-2} \\
I(B_1;E_2|E_1)_\rho=0. \label{eq:cmi-add-3}
\end{align}

Hayden {\it et al} in~\cite{HJPW2004structure} have derived the structural characterization of tripartite quantum states which satisfy the strong subadditivity of von Neumann entropy with equality; see Theorem~\ref{thm:HJPW-structure} in the appendix. Applying Theorem~\ref{thm:HJPW-structure} twice, we derive from Eqs.~\eqref{eq:cmi-add-2} and \eqref{eq:cmi-add-3} the structure of $\rho_{B_1B_2E_1E_2}$: there are local isometries $W_1:E_1\to \bar{E}_1E_1^LE_1^R$ and $W_2:E_2\to \bar{E}_2E_2^LE_2^R$, such that
\begin{align}
 &(W_1\ox W_2)\rho_{B_1B_2E_1E_2}(W_1\ox W_2)^\dagger \nb \\
=&\sum_{x,y}p_{x,y}\proj{x}_{\bar{E}_1}\ox\proj{y}_{\bar{E}_2}\ox\rho^{(x)}_{B_1E_1^L}\ox\rho^{(y)}_{B_2E_2^R}\ox\rho^{(x,y)}_{E_1^RE_2^L}, \label{eq:structure-p1}
\end{align} 
with $\{p_{x,y}\}$ a probability distribution, and $\{\ket{x}\}$ and $\{\ket{y}\}$ orthonormal bases. Denote this state as $\rho_{B_1B_2\tilde{E}_1\tilde{E}_2}$, with $\tilde{E}_1=\bar{E}_1E_1^LE_1^R$ and $\tilde{E}_2=\bar{E}_2E_2^LE_2^R$. From the right hand side of Eq.~\eqref{eq:structure-p1}, we can purify it as
$\rho_{\bar{A}_1\bar{A}_2\hat{A}A_1A_2B_1B_2\tilde{E}_1\tilde{E}_2}$, with vector form
\begin{align}
 &\ket{\rho}_{\bar{A}_1\bar{A}_2\hat{A}A_1A_2B_1B_2\tilde{E}_1\tilde{E}_2} \nb \\
=&\sum_{x,y}\sqrt{p_{x,y}}\ket{x,y}_{\bar{A}_1\bar{A}_2}\ket{x,y}_{\bar{E}_1\bar{E}_2}\ket{\rho^{(x)}}_{A_1B_1E_1^L}
\ket{\rho^{(y)}}_{A_2B_2E_2^R}\ket{\rho^{(x,y)}}_{\hat{A}E_1^RE_2^L}, \label{eq:structure-p2}
\end{align} 
Noticing that 
\begin{equation}
\rho_{AB_1B_2\tilde{E}_1\tilde{E}_2}
=(W_1\ox W_2)\rho_{AB_1B_2E_1E_2}(W_1\ox W_2)^\dagger
\end{equation}
is also a purification of $\rho_{B_1B_2\tilde{E}_1\tilde{E}_2}$, we conclude that there is a local isometry $W:A\to\bar{A}_1\bar{A}_2\hat{A}A_1A_2$ such that
\begin{equation}\label{eq:structure-p3}
W\rho_{AB_1B_2\tilde{E}_1\tilde{E}_2}W^\dagger
=\rho_{\bar{A}_1\bar{A}_2\hat{A}A_1A_2B_1B_2\tilde{E}_1\tilde{E}_2}.
\end{equation}
At last, tracing out $\tilde{E}_1\tilde{E}_2$ from both sides of Eq.~\eqref{eq:structure-p3} leads to Eq.~\eqref{eq:structure} and we are done.
\end{proof}

\begin{proof}[Proof of Theorem~\ref{thm:additivity2}]
Suppose that $\rho_{A_1B_1}\in\cO_{\cN_1}$ and $\rho_{A_2B_2}\in\cO_{\cN_2}$. Since the channel mutual information is additive (cf.~Lemma~\ref{lem:additivity-cmi}), we have $\rho_{A_1B_1}\ox\rho_{A_2B_2}\in\cO_{\cN_1\ox\cN_2}$. This immediately implies that
\begin{align}
V_{\rm max}(\cN_1\ox\cN_2)&\geq V_{\rm max}(\cN_1)+V_{\rm max}(\cN_2), \label{eq:add2-proof-00}\\
V_{\rm min}(\cN_1\ox\cN_2)&\leq V_{\rm min}(\cN_1)+V_{\rm min}(\cN_2). \label{eq:add2-proof-01}
\end{align}

To prove the other directions, we pick an arbitrary state $\rho_{AB_1B_2}\in\cO_{\cN_1\ox\cN_2}$. By Lemma~\ref{lem:structure}, there is a local isometry $W:A\to\bar{A}_1\bar{A}_2\hat{A}A_1A_2$, such that
\begin{equation}\label{eq:add2-proof-1}
W\rho_{AB_1B_2}W^\dagger
=\sum_{x,y}p_{x,y}\proj{x,y}_{\bar{A}_1\bar{A}_2}\ox\rho^{(x,y)}_{\hat{A}}\ox\rho^{(x)}_{A_1B_1}\ox\rho^{(y)}_{A_2B_2}.
\end{equation}
We make use of Lemma~\ref{lem:div-radius} to derive more information of the state $\rho_{AB_1B_2}$. Let $\sigma_{B_1}^*$ and $\sigma_{B_2}^*$ be the states that satisfy Eq.~\eqref{eq:div-radius}, associated with $\cN_1$ and $\cN_2$, respectively. Then the above-mentioned additivity of the channel mutual information implies that $\sigma_{B_1}^*\ox\sigma_{B_2}^*$ is the state that satisfies Eq.~\eqref{eq:div-radius} associated with $\cN_1\ox\cN_2$. So, Lemma~\ref{lem:div-radius} tells us that \begin{equation}
\rho_{B_1B_2}=\sigma_{B_1}^*\ox\sigma_{B_2}^*.
\end{equation}
Therefore,
\begin{align}
I(\cN_1\ox\cN_2)
&=D\left(\rho_{AB_1B_2}\|\rho_A\ox\sigma_{B_1}^*\ox\sigma_{B_2}^*\right) \nb\\
&=D\left(W\rho_{AB_1B_2}W^\dagger\|W\rho_AW^\dagger\ox\sigma_{B_1}^*\ox\sigma_{B_2}^*\right) \nb\\
&=\sum_{x,y}p_{x,y}\Big(D\left(\rho^{(x)}_{A_1B_1}\|\rho^{(x)}_{A_1}\ox\sigma_{B_1}^*\right)
                        +D\left(\rho^{(y)}_{A_2B_2}\|\rho^{(y)}_{A_2}\ox\sigma_{B_2}^*\right)\Big) \nb\\
&\leq I(\cN_1) + I(\cN_2), 
\label{eq:add2-proof-2}
\end{align}
where for the third line we have used Eq.~\eqref{eq:add2-proof-1}, and the last line is by Lemma~\ref{lem:div-radius}. Eq.~\eqref{eq:add2-proof-2}, together with the additivity of channel mutual information, yields that
\begin{align}
\forall x,~&D\left(\rho^{(x)}_{A_1B_1}\|\rho^{(x)}_{A_1}\ox\sigma_{B_1}^*\right)=I(\cN_1), \label{eq:add2-proof-3}\\
\forall y,~&D\left(\rho^{(y)}_{A_2B_2}\|\rho^{(y)}_{A_2}\ox\sigma_{B_2}^*\right)=I(\cN_2). \label{eq:add2-proof-4}
\end{align} 

Now, we computer the relative entropy variance of the states $\rho_{AB_1B_2}$ and $\rho_A\ox\rho_{B_1B_2}$. We have
\begin{align}
 &V\big(\rho_{AB_1B_2}\|\rho_A\ox\rho_{B_1B_2}\big) \nb\\
=&V\big(W\rho_{AB_1B_2}W^\dagger\|W\rho_AW^\dagger\ox\sigma_{B_1}^*\ox\sigma_{B_2}^*\big) \nb\\
=&\tr\left[W\rho_{AB_1B_2}W^\dagger\left(\log W\rho_{AB_1B_2}W^\dagger-\log W\rho_AW^\dagger\ox\sigma_{B_1}^*\ox\sigma_{B_2}^*\right)^2\right]-I(\cN_1\ox\cN_2)^2.
\end{align}
Employing Lemma~\ref{lem:structure} and Lemma~\ref{lem:additivity-cmi}, we proceed as 
\begin{align}
 &V\big(\rho_{AB_1B_2}\|\rho_A\ox\rho_{B_1B_2}\big) \nb\\
=&\sum_{x,y}p_{x,y}\tr\left[\left(\rho^{(x)}_{A_1B_1}\ox\rho^{(y)}_{A_2B_2}\right)
  \left(\log\rho^{(x)}_{A_1B_1}-\log(\rho^{(x)}_{A_1}\ox\sigma_{B_1}^*)
  +\log\rho^{(y)}_{A_2B_2}-\log(\rho^{(y)}_{A_2}\ox\sigma_{B_2}^*)\right)^2\right] \nb\\
 &-\big(I(\cN_1)+I(\cN_2)\big)^2. \nb\\
=&2\sum_{x,y}p_{x,y}\tr\left[\rho^{(x)}_{A_1B_1}
 \left(\log\rho^{(x)}_{A_1B_1}-\log(\rho^{(x)}_{A_1}\ox\sigma_{B_1}^*)\right)\right]\cdot\tr\left[\rho^{(y)}_{A_2B_2}
 \left(\log\rho^{(y)}_{A_2B_2}-\log(\rho^{(y)}_{A_2}\ox\sigma_{B_2}^*)\right)\right] \nb\\ 
 &+\sum_xp_x\tr\left[\rho^{(x)}_{A_1B_1}\left(\log\rho^{(x)}_{A_1B_1}-\log(\rho^{(x)}_{A_1}\ox\sigma_{B_1}^*)\right)^2\right] \nb\\
 &+\sum_yp_y\tr\left[\rho^{(y)}_{A_2B_2}\left(\log\rho^{(y)}_{A_2B_2}-\log(\rho^{(y)}_{A_2}\ox\sigma_{B_2}^*)\right)^2\right] \nb\\
 &-2I(\cN_1)I(\cN_2)-I(\cN_1)^2-I(\cN_2)^2, \label{eq:add2-proof-5}
\end{align}
where $p_x=\sum_yp_{x,y}$ and $p_y=\sum_xp_{x,y}$. By Eqs.~\eqref{eq:add2-proof-3} and \eqref{eq:add2-proof-4}, we get from Eq.~\eqref{eq:add2-proof-5} that
\begin{align}
 &V\big(\rho_{AB_1B_2}\|\rho_A\ox\rho_{B_1B_2}\big) \nb\\
=&\ \ \sum_xp_x\tr\left[\rho^{(x)}_{A_1B_1}\left(\log\rho^{(x)}_{A_1B_1}-\log(\rho^{(x)}_{A_1}\ox\sigma_{B_1}^*)\right)^2\right]
   -I(\cN_1)^2 \nb\\
 &+\!\! \sum_yp_y\tr\left[\rho^{(y)}_{A_2B_2}\left(\log\rho^{(y)}_{A_2B_2}-\log(\rho^{(y)}_{A_2}\ox\sigma_{B_2}^*)\right)^2\right] 
   -I(\cN_2)^2 \nb\\
=&V\left(\rho_{\bar{A}_1A_1B_1}\|\rho_{\bar{A}_1A_1}\ox\rho_{B_1}\right)+
  V\left(\rho_{\bar{A}_2A_2B_2}\|\rho_{\bar{A}_2A_2}\ox\rho^{B_2}\right), \label{eq:add2-proof-6}
\end{align}
where $\rho_{\bar{A}_1A_1B_1}=\tr_{\bar{A}_2\hat{A}A_2B_2} [W\rho_{AB_1B_2}W^\dagger]\in\cO_{\cN_1}$ and $\rho_{\bar{A}_2A_2B_2}=\tr_{\bar{A}_1\hat{A}A_1B_1} [W\rho_{AB_1B_2}W^\dagger]\in\cO_{\cN_2}$. Since $\rho_{AB_1B_2}\in\cO_{\cN_1\ox\cN_2}$ is picked arbitrarily, we deduce from  Eq.~\eqref{eq:add2-proof-6} that
\begin{align}
V_{\rm max}(\cN_1\ox\cN_2)&\leq V_{\rm max}(\cN_1)+V_{\rm max}(\cN_2), \label{eq:add2-proof-70}\\
V_{\rm min}(\cN_1\ox\cN_2)&\geq V_{\rm min}(\cN_1)+V_{\rm min}(\cN_2). \label{eq:add2-proof-71}
\end{align}

Eventually, the combinations of Eqs.~\eqref{eq:add2-proof-00} and \eqref{eq:add2-proof-70}, and \eqref{eq:add2-proof-01} and \eqref{eq:add2-proof-71} let us complete the proof.
\end{proof}

\section{Outlook}
First, we may define the more general completely bounded quasi-norms $\|\cdot\|_{\cb,p\rar q}$ for $0<p,q\leq1$. Possibly, this can be done by adapting the definition of the completely bounded norms for which $p,q\geq1$. Can we still keep the equivalence of this definition to \eqref{eq:cbquasinorm} when  $p=1$? What is the range of $(p,q)$ such that $\|\cdot\|_{\cb, p\rar q}$ is multiplicative?

\smallskip
Second, are there any potential applications of these completely bounded quasi-norms? For quantum Markov semigroups, hypercontractivity in terms of the completely bounded norms, and reverse hypercontractivity in terms of the ordinary Schatten qusi-norms, have been investigated in the literature (see, e.g.,~\cite{BeigiKing2016hypercontractivity,BDR2020quantum}). We do not know whether it is interesting to study the reverse hypercontractivity in terms of the completely bounded quasi-norms.


\smallskip
Last, it would be nice if we can identify the operational interpretations of the entropic quantities discussed in this paper, namely, $I_\alpha(\cN)$ with $\alpha\in[\frac{1}{2},1)$, as well as $V_{\rm max}$ and $V_{\rm min}$. In~\cite{LiYao2024operational}, an operational interpretation for the sandwiched quantum R\'enyi divergence with $\alpha\in[\frac{1}{2},1)$ was given, in characterizing the strong converse exponent of quantum covering-type problems including information decoupling. Due to the intrinsic connection between information decoupling and channel simulation~\cite{BDHSW2014quantum,BCR2011the}, we guess that $I_\alpha(\cN)$ with $\alpha\in[\frac{1}{2},1)$ may characterize the strong converse exponent of quantum channel simulation. This is supported by the more recent treatments of classical and classical-quantum channels~\cite{LLY2024large,OYB2024exponents}. On the other hand, $V_{\rm max}$ and $V_{\rm min}$ should characterize the second-order asymptotics of channel communication and channel simulation, supported by the established results for classical channels~\cite{Hayashi2009information,PPV2010channel,CRBT2024channel}, classical-quantum channels~\cite{TomamichelTan2015second}, and partial result for the fully quantum channels~\cite{DTW2016on}. The additivity properties obtained in the present paper is a step forward towards establishing these operational characterizations.

\appendix
\section{Auxiliary Results}
\label{appA} \setcounter{equation}{0} 
\global\long\def\theequation{A.\arabic{equation}}

Carlen and Lieb's convexity of the trace functions in the following lemma is an important technical tool for us.
\begin{lemma}[\cite{CarlenLieb2008a}]\label{lem:convexity}
	For any $1\leq p \leq 2$, $q\geq 1$, and any fixed $d\times d$ matrix $Y$, the trace function
	\begin{equation}
		\Upsilon_{p,q}(X)=\tr\left[(Y^*X^pY)^{\frac{q}{p}}\right],
	\end{equation}
	defined on the set of positive semidefinite $d\times d$ matrices, is convex.
\end{lemma}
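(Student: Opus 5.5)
We split according to whether $q\ge p$ or $1\le q<p$, and set $r:=q/p\in[\tfrac12,\infty)$. The case $r\ge 1$ should follow from standard operator convexity. The case $r<1$ should follow from a variational formula for $\tr A^r$ combined with Ando's joint convexity theorem.

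\emph{Case $q\geq p$.} For $p\in[1,2]$ the map $X\mapsto X^p$ is operator convex on positive semidefinite matrices. Hence $X\mapsto Y^*X^pY$ is operator convex: congruence by $Y$ preserves the Löwner order. For $r\ge1$ the function $A\mapsto\tr A^r$ is convex on positive matrices. It is also monotone, since $A\le B$ implies $\lambda_j(A)\le\lambda_j(B)$ for every $j$ by Weyl's monotonicity, and $t\mapsto t^r$ is increasing. Composing the monotone convex map $\tr(\cdot)^r$ with the operator convex map $X\mapsto Y^*X^pY$ gives a convex function. This settles $\Upsilon_{p,q}$ when $q\ge p$.

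\emph{Case $1\le q<p$, so $r\in[\tfrac12,1)$.} I will use the variational formula, valid for $A\ge0$ and $r\in(0,1)$:
\begin{equation*}
\tr A^r=\inf_{Z>0}\Big\{r\,\tr\big(AZ^{1-\frac1r}\big)+(1-r)\tr Z\Big\}.
\end{equation*}
The scalar computation shows the optimum is $z=a^r$. In the matrix case, $\tr(AZ^{1-1/r})\ge\sum_j\lambda_j(A)\lambda_j^\uparrow(Z)^{1-1/r}$ by the rearrangement inequality for traces of products of positive matrices. This reduces the claim to the scalar case, with equality at $Z=A^r$; if $A$ is singular we take $Z=A^r+\epsilon\1$ and let $\epsilon\to0$.

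Apply the formula with $A=Y^*X^pY$ and set $s:=\frac1r-1=\frac pq-1\in(0,p-1]$. Then $Z^{1-1/r}=Z^{-s}$. Reparametrize by $W:=Z^{t}$ with $t:=\frac{s}{p-1}\in(0,1]$, so that $Z^{-s}=W^{1-p}$ and $Z=W^{1/t}$. This gives
\begin{equation*}
\Upsilon_{p,q}(X)=\inf_{W>0}\Big\{r\,\tr\big(X^p\,Y W^{1-p}Y^*\big)+(1-r)\tr W^{1/t}\Big\}.
\end{equation*}
The second term is convex in $W$ because $1/t\ge1$. By Ando's convexity theorem, $(X,W)\mapsto\tr\big(X^pKW^{1-p}K^*\big)$ is jointly convex for $1\le p\le2$ and any fixed $K$; here $K=Y$. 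Thus the expression inside the infimum is jointly convex in $(X,W)$. The partial infimum over $W$ of a jointly convex function is convex in $X$, which gives convexity of $\Upsilon_{p,q}$.

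The main obstacle is the second case, specifically matching the exponents to Ando's theorem. The naive exponent $-s$ on $Z$ is not $1-p$, and the reparametrization $W=Z^{s/(p-1)}$ is exactly what fixes this. It works only because $s\le p-1$, which is equivalent to $q\ge1$, and this is what keeps the penalty term $\tr W^{1/t}$ convex. I would take Ando's (Lieb's) joint convexity as a known input rather than reprove it; the rest is the elementary variational identity and the standard fact that partial minimization preserves convexity.
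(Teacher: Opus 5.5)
The paper gives no proof of this lemma; it quotes it from Carlen and Lieb. Your proof is correct and self-contained modulo standard facts. It follows the variational route usually used for this result.
\begin{itemize}
\item For $q\ge p$, you compose the operator convex map $X\mapsto Y^*X^pY$ with the monotone convex functional $\tr(\cdot)^{q/p}$.
\item For $1\le q<p$, you write $\tr A^r$ (with $r=q/p<1$) as an infimum of an expression that is jointly convex in $(X,W)$ by Ando's theorem. You then minimize out the auxiliary variable.
\end{itemize}
The reparametrization $W=Z^{s/(p-1)}$ is a tidy touch. It lets you use only the endpoint form $(X,W)\mapsto\tr(Y^*X^pYW^{1-p})$ of Ando's theorem, rather than a version with a general exponent in $[1-p,0)$. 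It also shows that $q\ge1$ is exactly the condition that keeps the penalty $\tr W^{1/t}$ convex.

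Two minor points of rigor.

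First, in proving the variational formula, make the eigenvalue ordering explicit. The valid lower bound is $\tr(AB)\ge\sum_j\lambda_j^\downarrow(A)\lambda_j^\uparrow(B)$ with $B=Z^{1-1/r}$. Since the exponent is negative, $\lambda_j^\uparrow(Z^{1-1/r})=\lambda_j^\downarrow(Z)^{1-1/r}$. So your expression with $\lambda_j^\uparrow(Z)^{1-1/r}$ is right only if $\lambda_j(A)$ is also listed in increasing order; with $\lambda_j(A)$ decreasing it would be the upper bound. This does not affect the conclusion, because the scalar bound $r a z^{1-1/r}+(1-r)z\ge a^r$ holds for every pair $(a,z)$.

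Second, Ando's theorem is usually stated for positive definite arguments. The extension to $X\ge0$ with $W>0$ follows by continuity in $X$.
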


We also need the following additivity result.

\begin{lemma}[\cite{AdamiCerf1997neumann,BSST2002entanglement}]\label{lem:additivity-cmi}
	Let $\cN_1:\mathcal{L}(A_1')\rar\mathcal{L}(B_1)$ and $\cN_2:\mathcal{L}(A_2')\rar\mathcal{L}(B_2)$ be two quantum channels. We have
	\begin{equation}
		I(\cN_1\ox\cN_2)=I(\cN_1)+I(\cN_2).
	\end{equation}
\end{lemma}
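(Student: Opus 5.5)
We would prove the two inequalities separately. The direction $I(\cN_1\ox\cN_2)\geq I(\cN_1)+I(\cN_2)$ is elementary. Take optimal inputs $\bar\rho_{A_1A_1'}$ for $\cN_1$ and $\tilde\rho_{A_2A_2'}$ for $\cN_2$ and feed their tensor product into $\cN_1\ox\cN_2$. By the product-state form of the mutual information,
\[
I(A_1A_2:B_1B_2)_{\cN_1(\bar\rho)\ox\cN_2(\tilde\rho)}=I(A_1:B_1)_{\cN_1(\bar\rho)}+I(A_2:B_2)_{\cN_2(\tilde\rho)}.
\]
Since $I(\cN_1\ox\cN_2)$ maximizes over all inputs, this gives the lower bound.

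For the upper bound we would follow the Adami--Cerf route already sketched in the proof of Lemma~\ref{lem:structure}. Let $U_{\cN_i}:A_i'\to B_iE_i$ be Stinespring isometries. For any pure input $\rho_{AA'}$, purity of the global state gives $H(A)=H(BE)$ and $H(AB)=H(E)$. Hence
\[
I(A:B)=H(B)+H(BE)-H(E)=H(B|E)+H(B).
\]
This depends only on $\sigma_{A'}$, through $\sigma_{BE}=U_\cN\sigma_{A'}U_\cN^\dagger$. Restricting to pure inputs is without loss of generality, by data processing for relative entropy (monotonicity under partial trace on $A$), as the paper notes.

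Now apply this to the product channel with an arbitrary input state $\sigma_{A_1'A_2'}$. Subadditivity of von Neumann entropy gives $H(B_1B_2)\leq H(B_1)+H(B_2)$. For the conditional term we need
\[
H(B_1B_2|E_1E_2)\leq H(B_1|E_1)+H(B_2|E_2).
\]
We would obtain this by chaining
\[
H(B_1B_2|E_1E_2)=H(B_1|E_1E_2)+H(B_2|B_1E_1E_2)\leq H(B_1|E_1)+H(B_2|E_2),
\]
where each inequality drops conditioning systems, i.e.\ it is strong subadditivity~\cite{LiebRuskai1973proof}. Each bracket $H(B_i|E_i)+H(B_i)$ is evaluated on the marginal $\sigma_{B_iE_i}=U_{\cN_i}\sigma_{A_i'}U_{\cN_i}^\dagger$, so it is at most $I(\cN_i)$ by the alternative formula. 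Maximizing over $\sigma_{A_1'A_2'}$ yields $I(\cN_1\ox\cN_2)\leq I(\cN_1)+I(\cN_2)$.

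The only genuinely nontrivial ingredient is the conditional-entropy subadditivity, i.e.\ strong subadditivity, and we take it as given. The main thing to check carefully is the purity bookkeeping: $H(A)=H(B_1B_2E_1E_2)$ and $H(AB_1B_2)=H(E_1E_2)$ for the joint Stinespring extension. That is what rewrites the product-channel mutual information in the form $H(B_1B_2|E_1E_2)+H(B_1B_2)$.
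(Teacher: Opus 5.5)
Your proposal is correct and is essentially the argument the paper gives inside the proof of Lemma~\ref{lem:structure}. The $\geq$ direction comes from product inputs. The $\leq$ direction uses the Stinespring rewriting $I(A:B)=H(B|E)+H(B)$, followed by subadditivity of $H(B_1B_2)$ and of the conditional entropy; your chain-rule derivation of $H(B_1B_2|E_1E_2)\leq H(B_1|E_1)+H(B_2|E_2)$ from strong subadditivity just makes explicit a step the paper only cites.
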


Hayden, Jozsa, Petz and Winter have derived a structural characterization of tripartite quantum state $\rho_{ABC}$ which satisfies strong subadditivity of von Neumann entropy with equality, namely, $I(A:C|B)_\rho=0$.
\begin{theorem}[\cite{HJPW2004structure}]
\label{thm:HJPW-structure}
Let $\rho_{ABC}$ be a tripartite quantum state. $I(A:C|B)_\rho=0$ holds if and only if there is a decomposition of system $B$ as 
\begin{equation}
	B = \bigoplus_{j} {b_j^L} \otimes {b_j^R}
\end{equation}
into a direct sum of tensor products, such that
\begin{equation}
	\rho_{ABC} = \bigoplus_{j} p_j \sigma^{(j)}_{A b_j^L} \otimes \omega^{(j)}_{b_j^R C},
\end{equation}
for some probability distribution $\{p_j\}$, and states $\sigma^{(j)}$ and $\omega^{(j)}$ of $A b_j^L$ and $b_j^R C$, respectively.
\end{theorem}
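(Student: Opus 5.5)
The ``if'' direction is a direct computation, so I would dispose of it first. If $\rho_{ABC}=\bigoplus_j p_j\,\sigma^{(j)}_{Ab_j^L}\otimes\omega^{(j)}_{b_j^RC}$, the blocks are orthogonal. Hence each of $H(AB)$, $H(BC)$, $H(ABC)$ and $H(B)$ splits as $H(\{p_j\})$ plus the $p_j$-weighted sum of block entropies. Within each block the entropies factor over the tensor product. The combination $H(AB)+H(BC)-H(ABC)-H(B)$ then cancels term by term: the $H(\{p_j\})$ terms cancel, and in each block $H(\sigma^{(j)})+H(\omega^{(j)}_{b_j^R C})+H(\sigma^{(j)}_{b_j^L})+H(\omega^{(j)}_{b_j^R})$ appears with opposite signs. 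So $I(A:C|B)=0$.

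For the ``only if'' direction I would go through equality in the monotonicity of relative entropy. We have the identity
\[
I(A:C|B)_\rho = D(\rho_{ABC}\|\1_A\otimes\rho_{BC}) - D(\rho_{AB}\|\1_A\otimes\rho_B),
\]
where the second term is the first one after applying $\tr_C$. Vanishing of $I(A:C|B)$ is therefore equality in data processing under the partial trace. By Petz's equality theorem (restricting to supports if needed), the Petz map
\[
\mathcal R_{B\to BC}(X)=\rho_{BC}^{1/2}\bigl(\rho_B^{-1/2}X\rho_B^{-1/2}\otimes\1_C\bigr)\rho_{BC}^{1/2}
\]
then satisfies $\id_A\otimes\mathcal R(\rho_{AB})=\rho_{ABC}$. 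Here $\mathcal R$ is a channel on $\supp\rho_B$.

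The next step is to take the channel $T:=\tr_C\circ\mathcal R$ on $B$. It satisfies $(\id_A\otimes T)(\rho_{AB})=\rho_{AB}$, so every state $\tr_A[(X_A\otimes\1)\rho_{AB}]$, $X_A\ge0$, is a fixed point of $T$. I would replace $T$ by its Cesàro mean $\bar T=\lim_N\frac1N\sum_{n<N}T^n$. This is an idempotent channel with a faithful invariant state $\rho_B$. For such a channel, the fixed-point set of the dual $\bar T^*$ is a finite-dimensional $*$-algebra. Its structure theorem gives the decomposition $B=\bigoplus_j b_j^L\otimes b_j^R$, and on each block $\bar T$ acts as $X\mapsto \tr_{b_j^R}(X)\otimes\omega^{(j)}_{b_j^R}$ after projecting onto the blocks. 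This is the Koashi--Imoto / Lindblad fixed-point analysis.

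Invariance $(\id_A\otimes\bar T)(\rho_{AB})=\rho_{AB}$ then forces $\rho_{AB}=\bigoplus_j p_j\,\sigma^{(j)}_{Ab_j^L}\otimes\omega^{(j)}_{b_j^R}$. Finally I would apply $\mathcal R$ to this form and show that on each block it acts as $\id_{b_j^L}\otimes\mathcal R_j$ with $\mathcal R_j:b_j^R\to b_j^RC$. Then $\rho_{ABC}=\bigoplus_j p_j\,\sigma^{(j)}\otimes\mathcal R_j(\omega^{(j)})$, which is the claim.

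I expect the main obstacle to be this last structural step: showing that $\mathcal R$ itself, and not just $\bar T$, respects the block decomposition and acts trivially on the $b_j^L$ factors. To handle it I would first show that $\mathcal R^*$ restricted to the fixed-point algebra of $\bar T^*$ is a $*$-homomorphism into $\mathcal L(BC)$, using the multiplicative-domain argument: equality in Kadison--Schwarz on fixed points, guaranteed by faithfulness of $\rho_B$. This forces $\mathcal R^*$ to commute with the central projections and to act as the identity on the $\mathcal L(b_j^L)$ factors, which gives the product form on each block.
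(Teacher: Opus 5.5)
Your argument is correct and is essentially the Hayden--Jozsa--Petz--Winter proof that the paper cites without reproducing: Petz's equality condition gives the recovery map, and the Koashi--Imoto (fixed-point-algebra) analysis of $T=\tr_C\circ\mathcal{R}$ gives both the block decomposition and the fact that the recovery map acts trivially on the $b_j^L$ factors. One small correction to your last step: $\mathcal{R}^*$ maps $\mathcal{L}(BC)$ to $\mathcal{L}(B)$, so the precise statement is that $\mathcal{A}\ox\1_C$ lies in the multiplicative domain of $\mathcal{R}^*$, where $\mathcal{A}$ is the fixed-point algebra (the same for $T^*$ and $\bar T^*$, since $T\bar T=\bar T$); this holds because $\mathcal{R}^*(X\ox\1_C)=T^*(X)=X$ for $X\in\mathcal{A}$, it makes $\mathcal{R}$ an $\mathcal{A}$-bimodule map, and hence $\mathcal{R}$ equals $\id_{b_j^L}\ox\mathcal{R}_j$ on each block.
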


\begin{lemma}[Sion's minimax theorem~\cite{Sion1958general}]
	\label{lem:minimax}
	Let $X$ be a convex subset of a vector space $V$ and $Y$ be a compact convex set in a topological vector space $W$.
	Let $f : X \times Y \rar \mathbb{R}$ be such that
	\begin{enumerate}[(i)]
		\item $f(x,\cdot)$ is quasi-concave and upper semi-continuous on $Y$ for each $x \in X$, and
		\item $f(\cdot, y)$ is quasi-convex and lower semi-continuous on $X$ for each $y \in Y$.
	\end{enumerate}
	Then, we have
	\begin{equation}
		\label{eq:minimax}
		\inf_{x \in X} \sup_{y \in Y} f(x,y)= \sup_{y \in Y} \inf_{x \in X} f(x,y).
	\end{equation}
\end{lemma}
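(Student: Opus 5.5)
The inequality $\sup_{y}\inf_{x}f\le\inf_{x}\sup_{y}f$ is trivial, so the plan is to prove the reverse by contradiction, following Komiya's elementary proof, which uses only compactness and connectedness and no fixed-point theorem. Suppose there is a real $\alpha$ with $\sup_{y\in Y}\inf_{x\in X}f(x,y)<\alpha<\inf_{x\in X}\sup_{y\in Y}f(x,y)$. For each $x\in X$ put $Y_x^{\beta}:=\{y\in Y: f(x,y)\ge\beta\}$. By (i) this set is closed, hence compact, and convex. When $\beta\le\alpha$ it is nonempty, because the upper semicontinuous function $f(x,\cdot)$ attains its supremum on the compact set $Y$ and that supremum exceeds $\alpha$. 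If the family $\{Y_x^{\alpha}\}_{x\in X}$ has the finite intersection property, compactness gives a point $y_0$ lying in every $Y_x^\alpha$. Then $\inf_x f(x,y_0)\ge\alpha$, which contradicts the choice of $\alpha$. So everything reduces to showing that $\bigcap_{i=1}^n Y_{x_i}^{\alpha}\neq\emptyset$ for every finite set $x_1,\dots,x_n\in X$.

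\emph{Two-point lemma.} First I would show that for any $x_1,x_2\in X$ and any $\beta$ with $\beta<\min_{x\in X}$-level guaranteed above (i.e.\ $\beta<\inf_x\sup_y f$), we have $Y_{x_1}^{\beta}\cap Y_{x_2}^{\beta}\ne\emptyset$, and that this remains true when $Y$ is replaced by any compact convex subset $Y'$ on which $\inf_x\sup_{y\in Y'}f>\beta$. Suppose instead that the intersection is empty. Since $\max_y\min(f(x_1,y),f(x_2,y))<\beta$ is attained, we can pick $\beta'<\beta$ with $Y_{x_1}^{\beta'}\cap Y_{x_2}^{\beta'}=\emptyset$ still.

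For $z$ on the segment $[x_1,x_2]$, quasi-convexity in $x$ gives $f(z,y)\le\max(f(x_1,y),f(x_2,y))$. Hence $Y_z^{\beta}\subseteq Y_{x_1}^{\beta'}\cup Y_{x_2}^{\beta'}$, a union of two disjoint closed sets. Since $Y_z^{\beta}$ is convex, hence connected, and nonempty, it lies in exactly one of them. Define $I=\{z\in[x_1,x_2]: Y_z^\beta\subseteq Y_{x_1}^{\beta'}\}$ and $J$ analogously. These sets are disjoint, nonempty (they contain $x_1$ and $x_2$ respectively), and cover the segment.

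\emph{Main obstacle: openness of $I$ and $J$.} The difficulty is the mixed semicontinuity. Take $z_0\in I$ and choose $y_0$ with $f(z_0,y_0)>\beta$; this is possible because the strict level set is nonempty, using $\sup_y f(z_0,y)>\beta$. By lower semicontinuity in $x$, $f(z,y_0)>\beta$ for all $z$ near $z_0$. Thus $y_0\in Y_z^\beta$, and since $y_0\in Y_{z_0}^{\beta}\subseteq Y_{x_1}^{\beta'}$ we get $Y_z^\beta\subseteq Y_{x_1}^{\beta'}$, i.e.\ $z\in I$. The same argument applies to $J$. To make this work cleanly I would use the strict sets $\{f>\beta\}$ for the point $y_0$ and the closed sets $\{f\ge\beta\}$ for connectedness, which is why the slack $\beta'<\beta$ is introduced. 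The segment, being homeomorphic to $[0,1]$, would then be the union of two disjoint nonempty open sets, contradicting its connectedness.

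\emph{Induction.} Finally I would pass from two points to $n$ points by induction on $n$, applying the statement with $Y$ replaced by the compact convex set $Y'=Y_{x_n}^{\alpha}$. Assume the $n$-point statement holds for all compact convex $Y$ satisfying the standing hypothesis. If $\bigcap_{i\le n+1}Y_{x_i}^{\alpha}=\emptyset$, then, restricted to $Y'=Y_{x_{n+1}}^{\alpha}$, the sets $Y_{x_1}^{\alpha},\dots,Y_{x_n}^{\alpha}$ have empty intersection. By the induction hypothesis this forces $\inf_x\sup_{y\in Y'}f(x,y)\le\alpha$. Producing the contradiction here requires the two-point lemma applied to a pair $x_{n+1}$ and a suitable $z$, possibly after lowering $\alpha$ slightly as above; this bookkeeping is routine once the two-point lemma is in hand.
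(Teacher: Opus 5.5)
The paper does not prove this lemma. It quotes Sion's theorem from \cite{Sion1958general} as a black box and uses it only to exchange $\min$ and $\max$ in the proofs of Theorem~\ref{thm:additivity1} and Lemma~\ref{lem:div-radius}. Your proposal instead gives a self-contained proof: Komiya's elementary argument, mirrored so that compactness sits on the maximizing variable $Y$. The ingredients are the closed convex superlevel sets $Y_x^\beta$, the finite-intersection reduction, a two-point lemma via connectedness of segments in $X$, and induction by restricting $Y$. The argument is correct. Compared with the paper's citation, it costs length. In return it is fully elementary: it uses only compactness of $Y$ and connectedness of convex sets, and needs no KKM or fixed-point machinery, on which Sion's original proof relies.

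There are three small points of bookkeeping.

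\emph{The slack $\beta'$ is not needed.} Under the contradiction hypothesis, $Y_{x_1}^\beta$ and $Y_{x_2}^\beta$ are already disjoint closed sets, and quasi-convexity gives $Y_z^\beta\subseteq Y_{x_1}^\beta\cup Y_{x_2}^\beta$. The strictness that drives the openness step comes from $\beta<\inf_x\sup_y f$, which supplies $y_0$ with $f(z_0,y_0)>\beta$.

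\emph{The openness step assumes a vector topology.} It uses lower semicontinuity of $t\mapsto f((1-t)x_1+tx_2,\,y_0)$ on $[0,1]$. This holds because $V$ is implicitly a topological vector space; the lemma's statement already presupposes a topology on $X$ anyway.

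\emph{Closing the induction.} The natural level adjustment goes upward. Formulate the $n$-point claim as: for every compact convex $Y$ and every $\beta<\inf_x\sup_{y\in Y}f$, one has $\bigcap_{i\le n}Y_{x_i}^\beta\neq\emptyset$. Then the contrapositive on $Y'=Y_{x_{n+1}}^\alpha$ gives $\inf_x\sup_{y\in Y'}f\le\alpha$. Pick $\gamma\in(\alpha,\inf_x\sup_{y\in Y}f)$ and $z$ with $\sup_{y\in Y'}f(z,y)<\gamma$. Since $Y_{x_{n+1}}^\gamma\subseteq Y'$, you get $Y_z^\gamma\cap Y_{x_{n+1}}^\gamma=\emptyset$. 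This contradicts the two-point lemma at level $\gamma$ on the original $Y$. Lowering also works: first use compactness to find $\alpha'<\alpha$ with $\bigcap_{i\le n+1}Y_{x_i}^{\alpha'}=\emptyset$, then restrict to $Y_{x_{n+1}}^{\alpha'}$.
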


\bibliography{references}

\begin{thebibliography}{10}

\bibitem{HHHH2009quantum}
Ryszard Horodecki, Pawe{\l} Horodecki, Micha{\l} Horodecki, and Karol
  Horodecki.
\newblock Quantum entanglement.
\newblock {\em Rev. Mod. Phys.}, 81(2):865--942, 2009.

\bibitem{Hastings2009superadditivity}
Matthew~B. Hastings.
\newblock Superadditivity of communication capacity using entangled inputs.
\newblock {\em Nat. Phys.}, 5(4):255--257, 2009.

\bibitem{DJKR2006multiplicativity}
Igor Devetak, Marius Junge, Christoper King, and Mary~Beth Ruskai.
\newblock Multiplicativity of completely bounded $p$-norms implies a new
  additivity result.
\newblock {\em Commun. Math. Phys.}, 266(1):37--63, 2006.

\bibitem{GuptaWilde2015multiplicativity}
Manish~K. Gupta and Mark~M. Wilde.
\newblock Multiplicativity of completely bounded $p$-norms implies a strong
  converse for entanglement-assisted capacity.
\newblock {\em Commun. Math. Phys.}, 334(2):867--887, 2015.

\bibitem{Jencova2006a}
Anna Jen{\v{c}}cov{\'a}.
\newblock A relation between completely bounded norms and conjugate channels.
\newblock {\em Commun. Math. Phys.}, 266:65--70, 2006.

\bibitem{MDSFT2013on}
Martin M{\"u}ller-Lennert, Fr{\'e}d{\'e}ric Dupuis, Oleg Szehr, Serge Fehr, and
  Marco Tomamichel.
\newblock On quantum {R{\'e}nyi} entropies: a new generalization and some
  properties.
\newblock {\em J. Math. Phys.}, 54:122203, 2013.

\bibitem{WWY2014strong}
Mark~M. Wilde, Andreas Winter, and Dong Yang.
\newblock Strong converse for the classical capacity of entanglement-breaking
  and {Hadamard} channels via a sandwiched {R{\'e}nyi} relative entropy.
\newblock {\em Commun. Math. Phys.}, 331(2):593--622, 2014.

\bibitem{FrankLieb2013monotonicity}
Rupert~L. Frank and Elliott~H. Lieb.
\newblock Monotonicity of a relative {R{\'e}nyi} entropy.
\newblock {\em J. Math. Phys.}, 54:122201, 2013.

\bibitem{Beigi2013sandwiched}
Salman Beigi.
\newblock Sandwiched {R{\'e}nyi} divergence satisfies data processing
  inequality.
\newblock {\em J. Math. Phys.}, 54:122202, 2013.

\bibitem{CMW2016strong}
Tom Cooney, Mil{\'a}n Mosonyi, and Mark~M. Wilde.
\newblock Strong converse exponents for a quantum channel discrimination
  problem and quantum-feedback-assisted communication.
\newblock {\em Commun. Math. Phys.}, 344(3):797--829, 2016.

\bibitem{LiYao2022strong}
Ke~Li and Yongsheng Yao.
\newblock Strong converse exponent for entanglement-assisted communication.
\newblock {\em IEEE Trans. Inf. Theory}, 70(7):5017--5029, 2024.

\bibitem{LiYao2021reliable}
Ke~Li and Yongsheng Yao.
\newblock Reliable simulation of quantum channels: the error exponent.
\newblock {\em IEEE Trans. Inf. Theory}, 71(1):518--529, 2025.

\bibitem{HayashiTomamichel2016correlation}
Masahito Hayashi and Marco Tomamichel.
\newblock Correlation detection and an operational interpretation of the
  {R{\'e}nyi} mutual information.
\newblock {\em J. Math. Phys.}, 57:102201, 2016.

\bibitem{Umegaki1954conditional}
Hisaharu Umegaki.
\newblock Conditional expectation in an operator algebra.
\newblock {\em Tohoku Math. J.}, 6(2):177--181, 1954.

\bibitem{TomamichelHayashi2013hierarchy}
Marco Tomamichel and Masahito Hayashi.
\newblock A hierarchy of information quantities for finite block length
  analysis of quantum tasks.
\newblock {\em IEEE Trans. Inf. Theory}, 59(11):7693--7710, 2013.

\bibitem{Li2014second}
Ke~Li.
\newblock Second-order asymptotics for quantum hypothesis testing.
\newblock {\em Ann. Statist.}, 42(1):171--189, 2014.

\bibitem{BSST1999entanglement}
Charles~H. Bennett, Peter~W. Shor, John~A. Smolin, and Ashish~V. Thapliyal.
\newblock Entanglement-assisted classical capacity of noisy quantum channels.
\newblock {\em Phys. Rev. Lett.}, 83:3081--3084, 1999.

\bibitem{BSST2002entanglement}
Charles~H. Bennett, Peter~W. Shor, John~A. Smolin, and Ashish~V. Thapliyal.
\newblock Entanglement-assisted capacity of a quantum channel and the reverse
  {Shannon} theorem.
\newblock {\em IEEE Trans. Inf. Theory}, 48(10):2637--2655, 2002.

\bibitem{BDHSW2014quantum}
Charles~H. Bennett, Igor Devetak, Aram~W. Harrow, Peter~W. Shor, and Andreas
  Winter.
\newblock The quantum reverse {Shannon} theorem and resource tradeoffs for
  simulating quantum channels.
\newblock {\em IEEE Trans. Inf. Theory}, 60(5):2926--2959, 2014.

\bibitem{BCR2011the}
Mario Berta, Matthias Christandl, and Renato Renner.
\newblock The quantum reverse {Shannon} theorem based on one-shot information
  theory.
\newblock {\em Commun. Math. Phys.}, 306(3):579--615, 2011.

\bibitem{RTB2023moderate}
Navneeth Ramakrishnan, Marco Tomamichel, and Mario Berta.
\newblock Moderate deviation expansion for fully quantum tasks.
\newblock {\em IEEE Trans. Inf. Theory}, 69(8):5041--5059, 2023.

\bibitem{Lindblad1975completely}
G{\"o}ran Lindblad.
\newblock Completely positive maps and entropy inequalities.
\newblock {\em Commun. Math. Phys.}, 40(2):147--151, 1975.

\bibitem{Uhlmann1977relative}
Armin Uhlmann.
\newblock Relative entropy and the {Wigner-Yanase-Dyson-Lieb} concavity in an
  interpolation theory.
\newblock {\em Commun. Math. Phys.}, 54:21--32, 1977.

\bibitem{AdamiCerf1997neumann}
Christoph Adami and Nicolas~J. Cerf.
\newblock {von Neumann} capacity of noisy quantum channels.
\newblock {\em Phys. Rev. A}, 56(5):3470, 1997.

\bibitem{Wilde2013quantum}
Mark~M. Wilde.
\newblock {\em Quantum Information Theory}.
\newblock Cambridge University Press, Cambridge, 2013.

\bibitem{LiebRuskai1973proof}
Elliott.~H. Lieb and Mary~Beth Ruskai.
\newblock Proof of the strong subadditivity of quantum-mechanical entropy.
\newblock {\em J. Math. Phys.}, 14:1938--1941, 1973.

\bibitem{HJPW2004structure}
Patrick Hayden, Richard Jozsa, Denes Petz, and Andreas Winter.
\newblock Structure of states which satisfy strong subadditivity of quantum
  entropy with equality.
\newblock {\em Commun. Math. Phys.}, 246:359--374, 2004.

\bibitem{BeigiKing2016hypercontractivity}
Salman Beigi and Christopher King.
\newblock Hypercontractivity and the logarithmic sobolev inequality for the
  completely bounded norm.
\newblock {\em J. Math. Phys.}, 57:015206, 2016.

\bibitem{BDR2020quantum}
Salman Beigi, Nilanjana Datta, and Cambyse Rouz{\'e}.
\newblock Quantum reverse hypercontractivity: its tensorization and application
  to strong converses.
\newblock {\em Commun. Math. Phys.}, 376(2):753--794, 2020.

\bibitem{LiYao2024operational}
Ke~Li and Yongsheng Yao.
\newblock Operational interpretation of the sandwiched {R\'enyi} divergence of
  order 1/2 to 1 as strong converse exponents.
\newblock {\em Commun. Math. Phys.}, 405(2):22, 2024.

\bibitem{LLY2024large}
Shi-Bing Li, Ke~Li, and Lei Yu.
\newblock Large deviation analysis for the reverse shannon theorem.
\newblock {\em arXiv:2410.07984}, 2024.

\bibitem{OYB2024exponents}
Aadil Oufkir, Yongsheng Yao, and Mario Berta.
\newblock Exponents for classical-quantum channel simulation in purified
  distance.
\newblock {\em arXiv:2410.10770}, 2024.

\bibitem{Hayashi2009information}
Masahito Hayashi.
\newblock Information spectrum approach to second-order coding rate in channel
  coding.
\newblock {\em IEEE Trans. Inf. Theory}, 55(11):4947--4966, 2009.

\bibitem{PPV2010channel}
Yury Polyanskiy, H.~Vincent Poor, and Sergio Verd{\'u}.
\newblock Channel coding rate in the finite blocklength regime.
\newblock {\em IEEE Trans. Inf. Theory}, 56(5):2307--2359, 2010.

\bibitem{CRBT2024channel}
Michael~X. Cao, Navneeth Ramakrishnan, Mario Berta, and Marco Tomamichel.
\newblock Channel simulation: Finite blocklengths and broadcast channels.
\newblock {\em IEEE Trans. Inf. Theory}, 70(10):6780--6808, 2024.

\bibitem{TomamichelTan2015second}
Marco Tomamichel and Vincent Y.~F. Tan.
\newblock Second-order asymptotics for the classical capacity of image-additive
  quantum channels.
\newblock {\em Commun. Math. Phys.}, 338:103--137, 2015.

\bibitem{DTW2016on}
Nilanjana Datta, Marco Tomamichel, and Mark~M. Wilde.
\newblock On the second-order asymptotics for entanglement-assisted
  communication.
\newblock {\em Quantum Inf. Process.}, 15(6):2569--2591, 2016.

\bibitem{CarlenLieb2008a}
Eric~A. Carlen and Elliott~H. Lieb.
\newblock A {Minkowski} type trace inequality and strong subadditivity of
  quantum entropy {II}: convexity and concavity.
\newblock {\em Lett. Math. Phys.}, 83(2):107--126, 2008.

\bibitem{Sion1958general}
Maurice Sion.
\newblock On general minimax theorems.
\newblock {\em Pac. J. Math.}, 8(1):171--176, 1958.

\end{thebibliography}
\bibliographystyle{unsrt}

\end{document}